\declaretheorem{theorem}
\declaretheorem[sibling=theorem]{lemma}
\declaretheorem[sibling=theorem,style=definition]{definition}
\newcommand{\Oh}[1]{\mathcal{O}{\left(#1\right)}}
\newcommand{\Ot}[1]{\widetilde{\mathcal{O}}{\left(#1\right)}}
\newcommand{\Th}[1]{\Theta{\left(#1\right)}}
\let\eps\varepsilon
\let\leq\leqslant
\let\geq\geqslant
\newcommand{\mono}{\ensuremath{\mathsf{AE\text{-}Mono}\Delta}\xspace}
\newcommand{\sparse}{\ensuremath{\mathsf{AE\text{-}Sparse}\Delta}\xspace}
\newcommand{\minw}{\ensuremath{\mathsf{MinWitness}}\xspace}
\newcommand{\minmax}{\ensuremath{\mathsf{Min\text{-}Max}}\xspace}
\newcommand{\uapsp}{\ensuremath{\mathsf{UnweightedAPSP}}\xspace}
\newcommand{\coin}{\ensuremath{\mathsf{CoinChange}}\xspace}
\newcommand{\monoconv}{\ensuremath{\mathsf{MonoConvolution}}\xspace}
\title{Monochromatic Triangles, Intermediate Matrix Products,\\and Convolutions\thanks{We would like to thank Amir Abboud for fruitful discussions at an early stage of our research. Part of the research was done when the second author was visiting MIT. A preliminary version of this paper was presented at ITCS 2020.}}
\author{Andrea Lincoln\thanks{Partially supported by NSF Grant CCF-1909429.}\\MIT\\\texttt{andreali@mit.edu} \and Adam Polak\thanks{Partially supported by the National Science Center, Poland under grants 2017/27/N/ST6/01334 and 2018/28/T/ST6/00305.}\\Jagiellonian Univeristy\\\texttt{polak@tcs.uj.edu.pl} \and Virginia Vassilevska Williams\thanks{Supported by an NSF CAREER Award, NSF Grants CCF-1528078, CCF-1514339 and CCF-1909429, a BSF Grant BSF:2012338, a Google Research Fellowship and a Sloan Research Fellowship.}\\MIT\\\texttt{virgi@mit.edu}}
\date{}
\begin{document}

\maketitle

\begin{abstract}
The most studied linear algebraic operation, matrix multiplication, has surprisingly fast $O(n^\omega)$ time algorithms for $\omega<2.373$. On the other hand, the $(\min,+)$ matrix product which is at the heart of many fundamental graph problems such as All-Pairs Shortest Paths, has received only minor $n^{o(1)}$ improvements over its brute-force cubic running time and is widely conjectured to require $n^{3-o(1)}$ time. There is a plethora of matrix products and graph problems whose complexity seems to lie in the middle of these two problems. For instance, the Min-Max matrix product, the Minimum Witness matrix product, All-Pairs Shortest Paths in directed unweighted graphs and determining whether an edge-colored graph contains a monochromatic triangle, can all be solved in $\widetilde{O}(n^{(3+\omega)/2})$ time. While slight improvements are sometimes possible using rectangular matrix multiplication, if $\omega=2$, the best runtimes for these ``intermediate'' problems are all $\widetilde{O}(n^{2.5})$.

A similar phenomenon occurs for convolution problems. Here, using the FFT, the usual $(+,\times)$-convolution of two $n$-length sequences can be solved in $O(n\log n)$ time, while the $(\min,+)$-convolution is conjectured to require $n^{2-o(1)}$ time, the brute force running time for convolution problems. There are analogous intermediate problems that can be solved in $O(n^{1.5})$ time, but seemingly not much faster: Min-Max convolution, Minimum Witness convolution, etc.

Can one improve upon the running times for these intermediate problems, in either the matrix product or the convolution world? Or, alternatively, can one relate these problems to each other and to other key problems in a meaningful way?

This paper makes progress on these questions by providing a network of fine-grained reductions. We show for instance that APSP in directed unweighted graphs and Minimum Witness product can be reduced to both the Min-Max product and a variant of the monochromatic triangle problem, so that a significant improvement over $n^{(3+\omega)/2}$ time for any of the latter problems would result in a similar improvement for both of the former problems. We also show that a natural convolution variant of monochromatic triangle is fine-grained equivalent to the famous $3$SUM problem. As this variant is solvable in $O(n^{1.5})$ time and $3$SUM is in $O(n^2)$ time (and is conjectured to require $n^{2-o(1)}$ time), our result gives the first fine-grained equivalence between natural problems of different running times. We also relate $3$SUM to monochromatic triangle, and a coin change problem to monochromatic convolution, and thus to $3$SUM.
\end{abstract}

\section{Introduction}
Matrix multiplication is arguably the most fundamental linear algebraic operation. It is an important primitive for an enormous variety of applications. Within algorithmic research it has a very special role since it is one of the few problems for which we have surprisingly fast and completely counter-intuitive algorithms. Starting with Strassen's breakthrough~\cite{strassen} in 1969, a long line of research culminated in the current bound $\omega<2.373$~\cite{Vassilevska12,LeGall14}, where $\omega$ is the smallest real number so that $n\times n$ matrix multiplication can be performed in $\Oh{n^{\omega+\eps}}$ time for all $\eps>0$.

In many applications, one needs to compute matrix products that are a bit different (often called funny~\cite{AlonGM97}) from the usual definition of matrix multiplication over a ring such as the integers ($C_{ij}=\sum_k A_{ik} \cdot B_{kj}$). Such examples include matrix products over semirings such as the $(\min,+)$-product (often called distance product) which is over the tropical ($(\min,+)$) semiring, and the Max-Min product which is over the $(\max,\min)$-semiring.
Both these products are equivalent to certain types of path optimization problems in graphs. The distance product of $n\times n$ matrices is equivalent to the All-Pairs Shortest Paths (APSP) problem in $n$-node graphs, so that a $T(n)$ time algorithm for one problem would imply an $\Oh{T(n)}$ time algorithm for the other~\cite{fischermeyer}. Similarly, the Max-Min product is equivalent to the so called All-Pairs Bottleneck Paths (APBP) in graphs (e.g.~\cite{ShapiraYZ11}).

There seems to be a distinct complexity difference between APSP and APBP (and hence the corresponding matrix products), however. The fastest algorithms for APSP and the distance product run in $n^3/\exp{(\sqrt{\log n})}$ time~\cite{ryanapsp}, which is only better by an $n^{o(1)}$ factor than the trivial cubic time algorithm for the distance product. Meanwhile, as was first shown by~\cite{VassilevskaWY07,VassilevskaWY09}, APBP and the Max-Min product admit a much faster than cubic time algorithm via a reduction to (normal) matrix multiplication; the fastest running time is $\Oh{n^{(3+\omega)/2}}$~\cite{DuanP09}.

APSP is in fact conjectured to not admit any truly subcubic, $\Oh{n^{3-\eps}}$ time algorithms for $\eps>0$. Fine-grained complexity has strengthened this hypothesis by providing a large class of problems that are equivalent to APSP and the distance product, via fine-grained subcubic reductions. Thus the reason why distance product is seemingly so difficult is because there are many problems that are equivalent to it and researchers from different communities have all failed to solve these problems faster.

The best known running time for the $n\times n$ Max-Min product, $\Ot{n^{(3+\omega)/2}}$, while nontrivially subcubic, seems difficult to improve upon. In fact,  $\Ot{n^{(3+\omega)/2}}$ is the best known running time for many other matrix and graph problems besides the Max-Min product: 
the Dominance product~\cite{Matousek91} and Equality product~\cite{equalityhw,Labib19}, 
All-Pairs Nondecreasing Paths (APNP) and the $(\min,\leq)$-product~\cite{vnondec,vnondecj,DuanJW19}.
For some of these problems~\cite{yusterdom,Gold17} one can obtain slightly improved running times using rectangular matrix multiplication~\cite{legallurrutia}. However, the closer $\omega$ is to $2$, the smaller the improvements, and when $\omega=2$, the $\Ot{n^{(3+\omega)/2}}=\Ot{n^{2.5}}$ running time is the best known for all of these problems.
Since their running time exponent is essentially the average of the brute-force exponent $3$ and the fast matrix multiplication exponent $\omega$, we will call these problems ``intermediate''.

Next two problems that are intermediate if $\omega=2$ are: the Minimum Witness product, which is related to the problem of computing All-Pairs Least Common Ancestors in a DAG, and All-Pairs Shortest Paths (APSP) in unweighted directed graphs.
For both problems we know algorithms running in $\Ot{n^{(3+\omega)/2}} \leq \Ot{n^{2.687}}$ time~\cite{BenderPSS01,AlonGM97}, and both algorithms can be improved upon, by using rectangular matrix multiplication~\cite{Czumaj07LCA,Zwick02}. The improvement is already seen in a naive implementation, i.e.~cutting rectangular matrices into square blocks, which gives an $\Ot{n^{2+1/(4-\omega)}} \leq \Ot{n^{2.615}}$~time. Employing a specialized rectangular matrix multiplication algorithm~\cite{legallurrutia}, brings the runtime down to $\Ot{n^{2.529}}$. When $\omega=2$, however, all the improvements vanish and those running times become~$\Ot{n^{2.5}}$.

\begin{center}\emph{Is the $2.5$ running time exponent (for $\omega=2$) for all of these problems a coincidence, or can we relate all of them via fine-grained reductions, and use plausible hypotheses to explain it?}\end{center}
This is a question that many have asked, but unfortunately there are only two partial answers: First, it is known that Equality product and Dominance product are equivalent (\cite{equalityhw,Labib19}, also follows from Proposition~3.4 in~\cite{WilliamsW13}), and that they are equivalent to All-Pairs $\ell_{2p+1}$ Distances~\cite{Labib19}. The second result is that the Max-Min product is equivalent to approximate APSP in weighted graphs without scaling~\cite{Bringmann19Apx}. The main question above remains \emph{wide open}.

Parallel to the world of matrix products, there is a very similar landscape of \emph{convolution} problems. While it is well-known that the $(+,\times)$-convolution\footnote{The $(+,\times)$-convolution of two vectors $a$ and $b$ is the vector $c$ such that $c_i = \sum_j a_jb_{i-j}$.} of two $n$-length vectors can be computed in $\Oh{n \log n}$ time using the Fast Fourier Transform (FFT), these techniques no longer work for the $(\min,+)$-convolution, and this problem is conjectured to require $n^{2-o(1)}$ time (see e.g.~\cite{Cygan19}). Similar to the ``intermediate'' matrix product problems, there are analogous ``intermediate'' convolution problems, all in $\Ot{n^{3/2}}$ time\footnote{The exponent $(3+\omega)/2$ for intermediate matrix products is the average of the fast matrix multiplication exponent and the brute force matrix product exponent, and the exponent $3/2$ for intermediate convolution problems is the average of the fast convolution exponent $1$ and the brute force exponent $2$.}: Max-Min convolution, Dominance convolution, Minimum Witness convolution, etc. 
 
The convolution landscape is even somewhat cleaner than the matrix product one. As the normal convolution ($(+,\times)$) is already in (near-)linear time, there are no analogues of rectangular matrix multiplication speedups, and all intermediate problems happen to have exactly the same running time (up to polylogarithmic factors). Still, there is no real formal explanation of why they have the same running time. The only reductions between these convolutions are analogous to the matrix product ones: Dominance convolution is equivalent to Equality convolution~\cite{Labib19}, and approximate $(\min,+)$-convolution is equivalent to exact Max-Min convolution~\cite{Bringmann19Apx}.

\subsection{Our contributions}
In this paper we provide new fine-grained reductions between  several intermediate matrix product and all-pairs graph problems, and between intermediate convolution problems, also relating these to other key problems from fine-grained complexity such as $3$SUM. See Figure~\ref{fig:reductions} for a pictorial representation of our results.

\begin{figure}[t]
\centering
\begin{tikzpicture}[every edge/.style={draw,thick}]
  \node (monoconv) at (0,2) {\monoconv};
  \node (3sum)   at (3,2) {$3$SUM};
  \node (sparse) at (6,-0.1) [label={[name=listing]below:{\textsf{TriangleListing} $(t=m)$}}] {\sparse};
  \node (mono)   at (6,2) {\mono};
  \node (minw)   at (9,4) {\minw};
  \node (minmax) at (9,2) [label={[align=center,name=apbp]below:{\textsf{AP-BottleneckPaths}\\\textsf{ApproximateAPSP}}}] {\minmax};
  \node (apsp)   at (6,4) {\uapsp};
  \node (coin)   at (0,4) {\coin};
  
  \path [->] (coin) edge [dotted] (monoconv);
  \path [->] (mono) edge (sparse);

  \path [->] (3sum) edge [dashed] (mono);
 
  \path [->] (minw) edge [dashed] (mono);
  \path [->] (apsp) edge [dashed] (mono);
  \path [->] (minw) edge [dashed] (minmax);
  \path [->] (apsp) edge [dashed] (minmax);
  
  \path [->] (monoconv) edge[bend left=20] (3sum);
  \path [->] (3sum) edge[bend left=20] (monoconv);
  
  \node [draw=gray, dashed, rounded corners, inner sep=4pt, fit=(monoconv), label={[gray, label distance=6pt]below:{$\Ot{n^{1.5}}$}}] {};
  \node [draw=gray, dashed, rounded corners, inner sep=4pt, fit=(coin), label={[gray, label distance=4pt]right:{$\Ot{n^{4/3}}$}}] {};
  \node [draw=gray, dashed, rounded corners, inner sep=4pt, fit=(3sum), label={[gray, label distance=6pt]below:{$\Oh{n^2}$}}] {};
  \node [draw=gray, dashed, rounded corners, inner sep=4pt, fit=(sparse) (listing), label={[gray,label distance=4pt]left:{$\Oh{m^{2\omega/(\omega+1)}}$}}] {};
  \node [draw=gray, dashed, rounded corners, inner sep=4pt, fit=(apsp) (mono) (minw) (minmax) (apbp), label={[gray,label distance=4pt, xshift=65pt]below:{$\Ot{n^{(3+\omega)/2}}$}}] {};
\end{tikzpicture}
\caption{Our results. An arrow pointing from problem $A$ to problem $B$ means that problem $A$~reduces to problem $B$ in the fine-grained sense.
Solid arrows denote reductions which are tight with respect to the best currently known running times, i.e.~improving by a polynomial factor over the best known running time for one problem implies a polynomial improvement over the best known running time for the other.
Dashed arrows denote reductions which become tight when $\omega=2$.
The reduction from \coin to \monoconv, denoted by a dotted arrow, is not tight.}
\label{fig:reductions}
\end{figure}
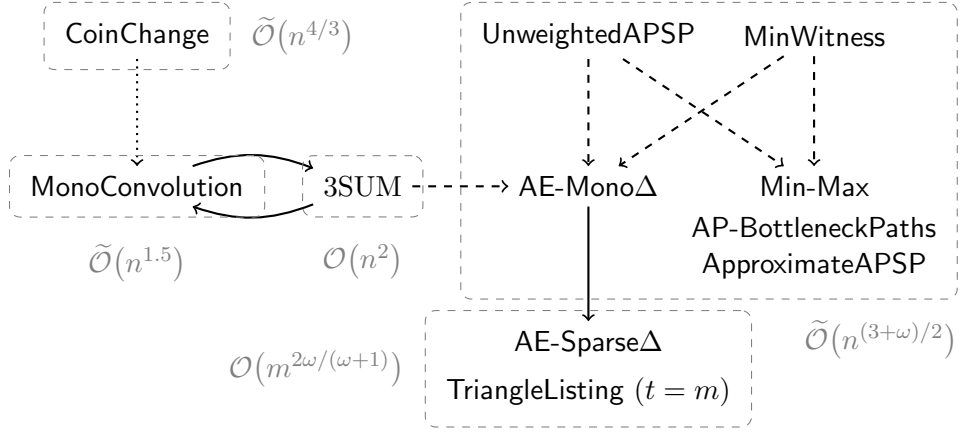

\paragraph*{Reductions for Graph Problems and Matrix Products.}
Several of our reductions concern the All-Edges Monochromatic Triangle (\mono) problem: Given an $n$-node graph in which each edge has a color from $1$ to $n^2$, decide for each edge whether it belongs to a \emph{monochromatic} triangle, a triangle whose all three edges have the same color. Vassilevska, Williams and Yuster~\cite{Hsubgraphs} studied the decision variant of \mono in which one asks whether the given graph contains a monochromatic triangle. They provided an $\Oh{n^{(3+\omega)/2}}$ time algorithm for the decision problem, but that algorithm is in fact strong enough to also solve the all-edges variant \mono, making \mono one of the ``intermediate'' problems of interest.

To obtain their $\Oh{n^{(3+\omega)/2}}$ time algorithm, Vassilevska, Williams and Yuster~\cite{Hsubgraphs} implicitly reduce \mono (in a black-box way) to the \sparse problem of deciding for every edge $e$ in an $m$-edge graph whether $e$ is in a triangle. The fastest known algorithm for \sparse is by Alon, Yuster and Zwick~\cite{Alon97}, running in $\Oh{m^{2\omega/(\omega+1)}}$ time, and the problem is known to be runtime equivalent to the problem of \emph{listing} up to $m$ triangles in an $m$-edge graph~\cite{Duraj19}. The black-box reduction of~\cite{Hsubgraphs} from \mono to \sparse implies that a significant improvement over the $\Oh{m^{2\omega/(\omega+1)}}$ time for \sparse would translate to an improvement over $\Oh{n^{(3+\omega)/2}}$ for \mono.

\begin{restatable}[implicit in~\cite{Hsubgraphs}]{theorem}{ThmMonoToSparse}
\label{thm:MonoToSparse}
If \sparse is in $\Oh{m^{2\omega/(\omega+1) - \eps}}$ time, for some $\eps > 0$, then \mono is in $\Oh{n^{(3+\omega)/2 - \delta}}$ time, for some $\delta > 0$.
\end{restatable}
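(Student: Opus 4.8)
\medskip\noindent\textbf{Proof plan.}\ The plan is to recover the algorithm of Vassilevska, Williams and Yuster~\cite{Hsubgraphs} underlying the $\Oh{n^{(3+\omega)/2}}$ bound, while tracking the exponent so that a polynomial improvement for \sparse is carried over to \mono. The key structural observation is that \mono decomposes perfectly over colours: an edge $e=(u,v)$ coloured $c$ lies in a monochromatic triangle if and only if it lies in a triangle of the subgraph $G_c$ formed by the edges of colour $c$, because all three edges of any such triangle must share the colour of $e$. Hence it suffices, for every colour $c$, to decide for every edge of $G_c$ whether it lies in a triangle of $G_c$, and this is exactly one instance of \sparse. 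Writing $m_c$ for the number of edges of colour $c$, we have $\sum_c m_c\leq\binom{n}{2}=\Oh{n^2}$, and the whole task splits into these per-colour subtasks.

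Next I would fix a threshold $t$, to be optimized, and call a colour \emph{rare} if $m_c\leq t$ and \emph{frequent} otherwise. We may assume $\eps$ is small, since a $\Oh{m^{2\omega/(\omega+1)-\eps}}$-time algorithm for \sparse is also a $\Oh{m^{2\omega/(\omega+1)-\eps'}}$-time algorithm for every $0<\eps'\leq\eps$; so without loss of generality $\alpha:=2\omega/(\omega+1)-\eps>1$. For the rare colours we run the assumed \sparse algorithm on each $G_c$, for a total time of $\Oh{\sum_{c\text{ rare}} m_c^{\alpha}}$; as $x\mapsto x^{\alpha}$ is convex and $\sum_c m_c=\Oh{n^2}$, this sum is maximized by taking $\Oh{n^2/t}$ colours with exactly $t$ edges each, so it is $\Oh{n^2 t^{\alpha-1}}$. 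For the frequent colours, of which there are fewer than $\binom{n}{2}/t=\Oh{n^2/t}$, we compute for each the Boolean square of the $n\times n$ adjacency matrix of $G_c$ in $\Oh{n^\omega}$ time; an edge $(i,j)$ of colour $c$ lies in a colour-$c$ triangle exactly when entry $(i,j)$ of that square is nonzero. This costs $\Oh{n^{2+\omega}/t}$, so the overall running time is $\Oh{n^2 t^{\alpha-1}+n^{2+\omega}/t}$.

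It remains to balance the two terms. Equating $n^2 t^{\alpha-1}=n^{2+\omega}/t$ gives $t^\alpha=n^\omega$, hence $t=n^{\omega/\alpha}=n^{\omega(\omega+1)/(2\omega-\eps(\omega+1))}$ and a total running time of $\Oh{n^{2+\omega-\omega(\omega+1)/(2\omega-\eps(\omega+1))}}$. At $\eps=0$ the exponent is exactly $2+\omega-(\omega+1)/2=(3+\omega)/2$, reproving the known bound; and since $\omega(\omega+1)/(2\omega-\eps(\omega+1))$ is strictly increasing in $\eps>0$, every $\eps>0$ yields $\Oh{n^{(3+\omega)/2-\delta}}$ for some $\delta=\delta(\eps,\omega)>0$, as desired. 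The only point needing a little care --- rather than a genuine obstacle --- is the worst-case estimate for the rare-colour sum, which is exactly where convexity of $x^{\alpha}$ together with the harmless assumption $\alpha>1$ is used; the rest of the argument is routine bookkeeping.
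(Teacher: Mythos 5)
Your proof is correct and takes essentially the same approach as the paper's: decompose by colour into \sparse instances, split colours into a heavy bucket handled by squaring the adjacency matrix in $\Oh{n^\omega}$ and a light bucket handled by the assumed \sparse algorithm, then balance. The only cosmetic differences are that you parameterize by an edge threshold $t$ rather than by the number of heaviest colour classes (the two are related by $t \leftrightarrow n^2/t$ and give identical bounds), and you justify $\sum_{\text{rare}} m_c^\alpha \leq \Oh{n^2 t^{\alpha-1}}$ via convexity whereas the paper writes the same bound directly as $m_c^\alpha = m_c \cdot m_c^{\alpha-1} \leq m_c\, t^{\alpha-1}$.
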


Our first set of results shows that \mono is powerful enough to capture two well-studied intermediate problems: the Minimum Witness product of two Boolean matrices and the All-Pairs Shortest Paths problem in directed unweighted graphs.

The Minimum Witness product (\minw) $C$ of two Boolean matrices $A$ and $B$ is defined as $C_{ij}=\min \{k \mid A_{ik}=B_{kj}=1\}$ (where the minimum is defined to be $\infty$ if there is no witness~$k$). \minw is used, e.g., for determining for every pair $u,v$ of vertices in a DAG, the least common ancestor of $u$ and $v$, i.e. solving the All-Pairs Least Common Ancestors problem~\cite{Czumaj07LCA}. The fastest known algorithm for \minw runs in $\Oh{n^{2.529}}$ time using rectangular matrix multiplication, and in $\Oh{n^{2+1/(4-\omega)}}$ time just using square matrix multiplication~\cite{Czumaj07LCA}.

The All-Pairs Shortest Paths (APSP) problem in unweighted graphs is very well-studied. While in undirected graphs, the problem is known to be solvable in $\Ot{n^\omega}$ time~\cite{seidel}, the problem in directed graphs is one of our intermediate problems. Its fastest algorithm (similarly to \minw) runs in $\Oh{n^{2.529}}$ time using rectangular matrix multiplication, and in $\Ot{n^{2+1/(4-\omega)}}$ time just using square matrix multiplication~\cite{Zwick02}. We will refer to the APSP problem in directed unweighted graphs as \uapsp.

We present reductions from \minw and \uapsp to \mono with only polylogarithmic overhead. 

\begin{restatable}{theorem}{ThmMinwToMono}
\label{thm:MinwToMono}
If \mono is in $T(n)$ time, then \minw is in $\Oh{T(n)\log n}$ time.
\end{restatable}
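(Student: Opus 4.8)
The plan is to reduce \minw to $\Oh{\log n}$ instances of \mono, performing a binary search --- simultaneously over all index pairs --- for the least witness. We may assume $n$ is a power of two (pad $A$ and $B$ with all-zero rows and columns; this changes no relevant entry of the output, since a padded index is never a witness), and we identify each candidate witness $k\in[n]$ with the length-$\log n$ binary string of $k-1$. We run $\log n$ rounds; after round $t$ every pair $(i,j)$ will carry a length-$t$ bit string $P_{ij}$, and we maintain the invariant that if $C_{ij}\neq\infty$ then the string of $C_{ij}-1$ has $P_{ij}$ as a prefix. Initially $P_{ij}$ is the empty string.

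In round $t$ we determine, for every $(i,j)$, whether $(i,j)$ has a witness $k$ whose string starts with $P_{ij}$ followed by $0$ --- that is, a witness in the ``left half'' of the dyadic block of indices currently consistent with $P_{ij}$ (the block has size $n/2^{t-1}$, the half has size $n/2^t$). We append $0$ to $P_{ij}$ if so and $1$ otherwise; the invariant is preserved, since when $C_{ij}\neq\infty$ its least witness lies in the current block, hence in the left half exactly when some witness does, in which case bit $t$ of $C_{ij}-1$ is indeed $0$. All these queries are answered by one \mono call, on the tripartite graph with parts $L=\{l_1,\dots,l_n\}$, $M=\{m_1,\dots,m_n\}$, $R=\{r_1,\dots,r_n\}$ that contains the edge $l_im_k$ whenever $A_{ik}=1$, the edge $m_kr_j$ whenever $B_{kj}=1$, and every edge $l_ir_j$. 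Its only triangles are $l_im_kr_j$ with $A_{ik}=B_{kj}=1$, i.e. with $k$ a witness of $(i,j)$. Color both $l_im_k$ and $m_kr_j$ with $2\pi_{t-1}(k)+\beta_t(k)+1$, where $\pi_{t-1}(k)\in\{0,\dots,2^{t-1}-1\}$ is the integer whose binary expansion is the length-$(t-1)$ prefix of $k-1$ and $\beta_t(k)\in\{0,1\}$ is the $t$-th bit of $k-1$; color $l_ir_j$ with $2P_{ij}+1$, reading $P_{ij}$ as an integer in $\{0,\dots,2^{t-1}-1\}$. All colors lie in $\{1,\dots,2n\}$, comfortably within the allowed palette on $3n$ vertices. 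Now $l_im_kr_j$ is monochromatic iff $k$ is a witness of $(i,j)$ and $2\pi_{t-1}(k)+\beta_t(k)=2P_{ij}$; since $\beta_t(k)\in\{0,1\}$ the latter holds iff $\pi_{t-1}(k)=P_{ij}$ and $\beta_t(k)=0$, i.e. iff $k$ is a witness in the left half of $(i,j)$'s block. Hence \mono reports $l_ir_j$ as lying in a monochromatic triangle exactly when the round-$t$ query for $(i,j)$ is affirmative, and we update $P_{ij}$ accordingly.

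After $\log n$ rounds each $P_{ij}$ has length $\log n$ and so names a single index $k_{ij}$; we output $C_{ij}=k_{ij}$ if $A_{ik_{ij}}=B_{k_{ij}j}=1$ and $C_{ij}=\infty$ otherwise. This is correct by the invariant: if $C_{ij}\neq\infty$ then its least witness lies in the (now singleton) current block, so it equals $k_{ij}$. The total cost is $\Oh{\log n}$ calls to \mono on graphs with $\Oh n$ vertices, plus $\Oh{n^2}$ per round to build the graph, extract the needed prefixes and bits (each in $\Oh 1$ time on the word RAM, or after an $\Oh{n\log n}$ precomputation), and update the $P_{ij}$, plus $\Oh{n^2}$ for the final verification; since the output of \mono already has size $\Th{n^2}$ we have $T(n)=\Omega(n^2)$ and $T(\Oh n)=\Oh{T(n)}$, so everything sums to $\Oh{T(n)\log n}$.

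I expect the only non-routine step to be the batching in the second paragraph: getting a \emph{single} monochromatic-triangle query to test, in parallel for all pairs, membership of a witness in a pair-\emph{dependent} interval. The trick is to push the pair's current prefix $P_{ij}$ into the color of the ``base'' edge $l_ir_j$ and each candidate witness's own prefix-and-next-bit into the (shared) color of its two ``leg'' edges; once this coloring is set up, the binary search and the running-time accounting are standard.
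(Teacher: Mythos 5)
Your proof is correct and takes essentially the same approach as the paper's: a parallel binary search over witness indices, where each of the $\Oh{\log n}$ rounds is answered by one \mono call on a tripartite graph whose leg-edge colors encode the current dyadic prefix/interval of the candidate witness $k$ and whose base-edge color on $l_ir_j$ encodes the interval that pair is currently restricted to. The differences are cosmetic --- you phrase intervals as bit-string prefixes and $0$-index witnesses, and you pad with zero rows/columns plus a final verification step, whereas the paper pads with an all-ones column/row so the output is always finite --- and your closing appeal to ``$T(n)=\Omega(n^2)$ hence $T(\Oh n)=\Oh{T(n)}$'' is really the standard tacit regularity assumption on $T$ rather than a consequence of the lower bound, but the paper relies on the same assumption.
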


\begin{restatable}{theorem}{ThmApspToMono}
\label{thm:ApspToMono}
If \mono is in $T(n)$ time, then \uapsp is in $\Oh{T(n)\log^2 n}$ time.
\end{restatable}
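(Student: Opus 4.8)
The plan is to graft the classical scaling / bridging-set approach to unweighted directed APSP (Alon--Galil--Margalit, Zwick) onto the colour-matching idea behind the reduction proving Theorem~\ref{thm:MinwToMono}.

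First, the outer loop. I would process scales $r=1,2,4,\dots$ up to $r\geq n$, maintaining a matrix $D$ with the invariant that, at the start of the step for scale $r$, $D_{ij}=d(i,j)$ whenever $d(i,j)\leq r$, and $D_{ij}\geq d(i,j)$ in general, with entries that exceed the current scale stored as $+\infty$. The step for scale $r$ must re-establish this for scale $2r$ (or $\tfrac32 r$, if one insists on keeping a random bridging set around to shrink the inner dimension). For every pair with $d(i,j)\leq 2r$ one has $d(i,j)=\min_k(D_{ik}+D_{kj})$, the minimum over the $k$'s with $D_{ik},D_{kj}\leq r$: a shortest $i$--$j$ path contains a vertex at each distance $0,1,\dots,d(i,j)$ from $i$, so some vertex on it has $d(i,k),d(k,j)\leq r$ and $D_{ik}+D_{kj}=d(i,j)$, while for every $k$ the right-hand side is at least $d(i,j)$; for pairs with $d(i,j)>2r$ the right-hand side is a harmless over-estimate. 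Since there are $\Oh{\log n}$ scales, it suffices to perform one such bounded bridging distance product using $\Oh{\log n}$ calls to \mono.

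Second, one step via a joint binary search. Both operands have entries in $\{0,\dots,r\}\subseteq\{0,\dots,\Oh{n}\}$, so the target values lie in $\{0,\dots,\Oh{n}\}$ and I would recover them bit by bit over all pairs at once, exactly as in Theorem~\ref{thm:MinwToMono}: having fixed the top bits we know, for each pair $(i,j)$, a multiple $\ell_{ij}$ of $2^{t+1}$ with $\ell_{ij}\leq\min_k(D_{ik}+D_{kj})<\ell_{ij}+2^{t+1}$ --- hence $D_{ik}+D_{kj}\geq\ell_{ij}$ for every relevant $k$ --- and to extract bit $t$ we must decide, for all pairs simultaneously, whether there is a $k$ with $D_{ik}+D_{kj}\in[\ell_{ij},\ell_{ij}+2^t)$, equivalently $\lfloor(D_{ik}+D_{kj})/2^t\rfloor=\ell_{ij}/2^t$. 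I would encode this as a \mono instance on the tripartite vertex set $\{i\}\cup\{(k,a):a\in\{0,\dots,2^t-1\}\}\cup\{j\}$, where the copy index $a$ records the low part $D_{ik}\bmod 2^t$: the edge $i$--$(k,a)$ is present iff $D_{ik}\leq r$ and $D_{ik}\bmod 2^t=a$, coloured $\lfloor D_{ik}/2^t\rfloor$; the edge $(k,a)$--$j$ is present iff $D_{kj}\leq r$, coloured $\lfloor(D_{kj}+a)/2^t\rfloor$ --- the high part of $D_{kj}$ plus the carry $\mathbf{1}[(D_{kj}\bmod 2^t)+a\geq 2^t]$, which this edge can compute since it ``knows'' $a$; and the edge $i$--$j$ is coloured with the known value $\ell_{ij}/2^{t+1}$. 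A monochromatic triangle through $i$--$j$ then forces $\lfloor D_{ik}/2^t\rfloor=\lfloor(D_{kj}+a)/2^t\rfloor=\ell_{ij}/2^{t+1}$, i.e.\ $D_{ik}=\ell_{ij}/2+a$ and $D_{kj}\in[\ell_{ij}/2-a,\ell_{ij}/2-a+2^t)$, which is precisely $D_{ik}+D_{kj}\in[\ell_{ij},\ell_{ij}+2^t)$; conversely, for the pairs we care about a shortest $i$--$j$ path contains a vertex whose distance from $i$ lies both in the window $[d(i,j)-r,r]$ and in the sub-window $[\ell_{ij}/2,\ell_{ij}/2+2^t)$, so such a triangle exists whenever bit $t$ is $0$. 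Over $\Oh{\log n}$ scales, each with $\Oh{\log n}$ bits, this is $\Oh{\log^2 n}$ \mono calls. (With bridging sets one draws a fresh sample per bit, sized so that it hits the relevant sub-window whp while keeping the number of vertices $(k,a)$ near-linear; without them the instance has $n\cdot 2^t$ vertices, kept under control because at scale $r$ only values below $2r$ matter, so $2^t\leq\Oh{r}$.)

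The main obstacle is exactly this encoding. \mono can only detect a \emph{coincidence} of three colours, whereas the quantity we must threshold, $D_{ik}+D_{kj}$, is an \emph{additive} function living on two of the triangle's three pairs; the naive choice of colouring $(i,k)$ by $D_{ik}$, $(k,j)$ by $D_{kj}$ and $(i,j)$ by a target only detects $D_{ik}=D_{kj}=\text{target}$. The two ideas that make it work --- pushing the low part of $D_{ik}$ into the identity of the intermediate vertex so that the inter-edge carry becomes an intra-edge quantity, and comparing the high parts of $D_{ik}$ and $D_{kj}$ against the \emph{halved} prefix $\ell_{ij}/2^{t+1}$ so that one shared colour captures the additive relation --- still leave a fair amount of bookkeeping: handling the boundary cases (sums hitting an interval endpoint; the top couple of bits, where the sub-window is wide), verifying that the invariant $D_{ik}+D_{kj}\geq\ell_{ij}$ is preserved, and checking that the \mono instances stay near-linear in size. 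This is where essentially all of the proof's work lies.
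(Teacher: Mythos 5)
Your outer structure --- scaling through $r=1,2,4,\dots$, and a parallel per-bit binary search within each scale --- is exactly the one the paper uses (they compute $D^{\leq 2^i}$ for $i=0,1,\dots,\log n$, and within each step do a joint binary search as in Theorem~\ref{thm:MinwToMono}). And you correctly identify the central enabling fact: a shortest path in an unweighted graph can be split at a vertex $w$ with $d(u,w),d(w,v)\in\{\lfloor d(u,v)/2\rfloor,\lceil d(u,v)/2\rceil\}$. Where you diverge is in how a single binary-search step is encoded as a \mono instance, and this is where your argument has a genuine gap.

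You try to encode the \emph{additive} threshold $D_{ik}+D_{kj}\in[\ell_{ij},\ell_{ij}+2^t)$ directly by blowing up the middle part to vertices $(k,a)$ with $a=D_{ik}\bmod 2^t$ and computing the carry on the $(k,a)$--$j$ edge. Even granting that the colour bookkeeping works out, this produces a \mono instance on $\Theta(n\cdot 2^t)$ vertices, and at the top scale $r=\Theta(n)$ you yourself allow $2^t=\Oh{r}$, so the instance can have $\Theta(n^2)$ vertices. A call of size $\Theta(n^2)$ costs $T(n^2)$, not $T(n)$, so the stated $\Oh{T(n)\log^2 n}$ bound does not follow. Your parenthetical fix --- drawing a fresh bridging sample per bit so that the sub-window $[\ell_{ij}/2,\ell_{ij}/2+2^t)$ is hit whp --- plausibly controls the size, but it turns the reduction into a randomized one, whereas the theorem (and the paper's proof) is deterministic; compare with Theorem~\ref{thm:3sumToMono}, which the paper is careful to flag as randomized.

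The paper avoids the blowup by exploiting the balanced split \emph{in the encoding itself}, not just in the converse direction of the correctness proof. Because the optimal $w$ satisfies $A_{uw},A_{wv}\in\{\lfloor d/2\rfloor,\lceil d/2\rceil\}$, when the top bits narrow $B_{uv}$ to an interval indexed by $j$, determining the next bit amounts to asking whether there is a $w$ with \emph{both} $A_{uw}\in\big[2^{\ell-1}\cdot 2j,\ 2^{\ell-1}(2j+1)\big)$ and $A_{wv}\in\big[2^{\ell-1}\cdot 2j,\ 2^{\ell-1}(2j+1)\big]$. Crucially, these ranges for different $j$ are disjoint, so one can colour each edge $(u,w)$ (resp.\ $(w,v)$) by the unique $j$ whose range contains $A_{uw}$ (resp.\ $A_{wv}$), colour $(u,v)$ by $B^{(\ell+1)}_{uv}$, and issue a single \mono call on a tripartite graph with only $3n$ vertices --- no $(k,a)$ expansion, no carries, no bridging sets. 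This is why the paper gets $\Oh{\log^2 n}$ \mono calls on $n$-vertex graphs deterministically. Your argument as written does not reach that bound; you would need to either prove that the expanded instances can always be kept near-linear deterministically, or settle for a randomized statement weaker than what the theorem claims.
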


The above reductions tightly relate \minw and \uapsp to \mono if $\omega=2$, showing that any improvement over the $2.5$ exponent for \mono, gives the same improvement for \minw and \uapsp. Due to the tight reduction (Theorem~\ref{thm:MonoToSparse}) from \mono to \sparse, we also obtain that an $\Oh{m^{4/3-\eps}}$ time algorithm, with $\eps>0$, for \sparse would give $\Oh{n^{2.5-\delta}}$ time algorithms, for $\delta>0$, for \minw and \uapsp, presenting another tight relationship for the case when $\omega=2$.

Our next result is that improving over the exponent $2.5$ for \mono is at least as hard as obtaining a truly subquadratic time algorithm for the $3$SUM problem.

\begin{restatable}{theorem}{ThmTsumToMono}
\label{thm:3sumToMono}
If \mono is in $\Oh{n^{5/2-\eps}}$ time, then $3$SUM is in (randomized) $\Ot{n^{2-\frac{4}{5}\eps}}$ time.
\end{restatable}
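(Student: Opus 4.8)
The plan is to reduce $3$SUM to \mono so that (after polylogarithmic overhead) the hypothesized algorithm is run on graphs with $\Theta(n^{4/5})$ vertices, where it takes $\Ot{(n^{4/5})^{5/2-\eps}}=\Ot{n^{2-\frac{4}{5}\eps}}$ time. First I normalize the instance: by standard hashing, $3$SUM on $n$ integers reduces (randomized, polylogarithmic overhead) to finding $a\in A$, $b\in B$, $c\in C$ with $a+b=c$ for sets $A,B,C$ in a universe of size $\mathrm{poly}(n)$ on which the hash is injective; an almost-linear hash $h$ then lets me take the universe to be $U=\Theta(n^{8/5})$, at the cost that a genuine solution only satisfies $h(a)+h(b)\equiv h(c)+\delta\pmod{U}$ for one of a constant number of "carry/wraparound" values $\delta$, so that any candidate triple produced later must be checked against the original integers in $O(1)$ time to discard false positives. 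This verification step is why the final algorithm is randomized.

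The core of the reduction is a gadget turning $3$SUM into monochromatic triangle. Given a (hashed) $3$SUM instance whose three sets have size $s=\Omega(n^{4/5})$, take the tripartite graph whose parts are copies of $A$, $B$, $C$; color the edge between $a\in A$ and $b\in B$ with the value $a+b$, and color the edges between $a$ and $c\in C$ and between $b$ and $c$ with the value $c$. Since $U=O(s^2)$ this is a legal \mono coloring, and a triangle $\{a,b,c\}$ is monochromatic exactly when $a+b=c$; hence one \mono call on $\Theta(s)$ vertices settles the instance in $\Ot{s^{5/2-\eps}}$ time. To bring a general instance into the range $s=\Theta(n^{4/5})$ I subsample: keeping each element of each set independently with probability $p$ sends an instance with $N$ solutions on size-$n$ sets to one with $\approx Np^3$ solutions on size-$\approx np$ sets, so with $p=\Theta(n^{-1/5})$ any instance with at least roughly $n^{3/5}$ solutions keeps a solution with high probability and the gadget finds it in $\Ot{n^{2-\frac{4}{5}\eps}}$ time. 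Running over $O(\log n)$ subsampling rates and taking the OR handles every instance that is not "solution-poor", the total cost being a geometric series bounded by $\Ot{n^{2-\frac{4}{5}\eps}}$.

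The remaining case — solution-poor instances, where subsampling to size $\Theta(n^{4/5})$ destroys the few solutions and the gadget above is unusable — is the step I expect to be the main obstacle. My plan is to bucket $A$, $B$, $C$ into $R=\Theta(n^{4/5})$ buckets of size $g=\Theta(n^{1/5})$ and place the $\Theta(R^2)$ \emph{relevant} bucket-triples — relevant because almost-linearity of $h$ forces the bucket of $h(c)$ to be (up to the carry) the sum of the buckets of $h(a)$ and $h(b)$ — onto a single tripartite \mono graph with $\Theta(R)=\Theta(n^{4/5})$ vertices, one per bucket, so that a monochromatic triangle corresponds to a solution inside some bucket-triple. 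The difficulty is the edge coloring: an edge lies between two size-$g$ buckets, may use only one color (out of the $\Theta(R^2)$ available), must make its triangles monochromatic precisely when its bucket-triple hosts a solution, yet it "sees" $g^2=\Theta(n^{2/5})$ candidate pairs and cannot afford to inspect them — doing so over all edges costs $\sum_{I,J}|A_I||B_J|=n^2$. So the coloring must be oblivious — a cheap function of the two bucket-contents — while still mimicking the element gadget, in which the color $a+b$ of the $(a,b)$-edge automatically equals the color $c$ of the $(a,c)$- and $(b,c)$-edges exactly when $a+b=c$. My intended ingredients are: let the bucket indices (the vertices) carry the high-order bits of the elements and pack the low-order bits into the colors; use the few-solutions promise to argue that only $O(\mathrm{polylog}(n))$ candidate completions per edge can matter; and add a fresh random additive shift so the $\Theta(R^2)$ possible spurious monochromatic triangles all vanish by a union bound. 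Pinning down a coloring that is simultaneously cheap to build, sound (no spurious monochromatic triangle), and complete (the genuine triangle is monochromatic) is where essentially all the technical work lives; combined with the verification above and a union bound over the $O(\log n)$ subsampling levels, it yields $3$SUM in randomized $\Ot{n^{2-\frac{4}{5}\eps}}$ time.
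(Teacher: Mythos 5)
Your proposal diverges from the paper's route and, as you yourself flag, has a genuine gap in the solution-poor case that your sketch does not close. Your tripartite gadget (vertices $=A\sqcup B\sqcup C$, color the $(a,b)$-edge with $a+b$, color the $(a,c)$- and $(b,c)$-edges with $c$) is correct but produces a \emph{dense} graph on $\Theta(s)$ vertices with $\Theta(s^2)$ edges; one \mono call solves exactly one instance, and you cannot afford $\Theta(n^{2/5})$ independent calls of size $n^{4/5}$. Subsampling rescues you only when the instance has $\gtrsim n^{3/5}$ solutions, so the bucketing idea has to carry all the weight for the remaining instances — and there the obstacle you name is fatal as stated: the $\mono$ edge between $A$-bucket $I$ and $B$-bucket $J$ gets \emph{one} color, yet must simultaneously certify or refute all $g^2$ candidate sums $a+b$ with $a\in I$, $b\in J$. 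A "few global solutions" promise does not thin this to a single relevant pair per edge (you don't know where the solution sits), and a random shift cannot fix a construction that is already information-theoretically too coarse: distinct pairs $(a,b)$, $(a',b')$ in the same bucket pair with $a+b\neq a'+b'$ need different colors to avoid both false positives and false negatives, but there is only one edge. So the core technical step is not a detail to be filled in; it is the whole theorem.

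The paper sidesteps this entirely by never trying to encode $3$SUM directly into a single dense $\mono$ graph. After the same $\Oh{n^{2/5}}$ self-reduction to size-$\Oh{n^{4/5}}$ instances, it invokes the Kopelowitz--Pettie--Porat reduction (their Lemma~\ref{lem:sparse}) to turn each sub-instance into an $\sparse$ instance on $\Theta(n^{4/5}\log n)$ vertices with only $\Theta(n^{6/5}\log n)$ edges. The crucial point is sparsity: after independently randomly permuting the vertex set of each graph, the expected number of sub-instances containing any fixed pair $(u,v)$ is $\Oh{1/\log n}$, so by Chernoff every pair appears in at most $c\log n$ sub-instances whp. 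Numbering the parallel edges $1,\ldots,c\log n$ and iterating over triples $(i,j,k)\in[c\log n]^3$, each choice yields a tripartite \mono instance in which the color of an edge records which sub-instance it came from, and monochromatic triangles are exactly the triangles of the sub-instances. This gives $\mathrm{polylog}(n)$ \mono calls on $\Ot{n^{4/5}}$ vertices, which is precisely what your runtime accounting needs but your dense gadget cannot supply. If you want to salvage your approach, the realistic path is to abandon the direct $a+b=c$ gadget and route through a sparse triangle instance as the paper does; the sparsity is what makes the "combine many sub-instances into one colored graph" step go through.
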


In $3$SUM one is given $n$ integers and is asked whether three of them sum to $0$. The problem is easy to solve in $\Oh{n^2}$ time, and slightly subquadratic time algorithms exist~\cite{Baran08,Chan18}. $3$SUM is a central problem in fine-grained complexity~\cite{vsurvey}. It is hypothesized to require $n^{2-o(1)}$ time (on a word-RAM with $\Oh{\log n}$ bit words), and many fine-grained hardness results are conditioned on this hypothesis (see~\cite{GajentaanO95,vsurvey}). Our reduction shows that, under the $3$SUM Hypothesis, the exponent $2.5$ for \mono cannot be beaten, and this is tight if $\omega=2$.
We note that before our work no intermediate matrix, graph, or convolution problem was known to be $3$SUM-hard.

Next, we consider the Min-Max product (\minmax) of two matrices $A$ and $B$, defined as $C_{ij}=\min_k \max(A_{ik},B_{kj})$. The Min-Max product is equivalent to the aforementioned Max-Min product (just negate the matrix entries) and the All-Pairs Bottleneck Paths problem, and is thus solvable in $\Oh{n^{(3+\omega)/2}}$ time~\cite{DuanP09}.

A very simple folklore reduction shows that \minmax on $n\times n$ integer matrices is at least as hard as \minw on $n\times n$ Boolean matrices, giving a tight relationship when $\omega=2$.

\begin{restatable}[folklore]{theorem}{ThmMinwToMinmax}
\label{thm:MinwToMinmax}
If \minmax is in $T(n)$ time, then \minw is in $\Oh{T(n)}$ time.
\end{restatable}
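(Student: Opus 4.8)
The plan is to encode the Boolean matrices $A,B$ of a \minw instance as integer matrices $A',B'$ of the same dimensions, so that a single \minmax product of $A'$ and $B'$ recovers, entrywise, the smallest common witness. First I would set
\[
A'_{ik} =
\begin{cases}
k, & \text{if } A_{ik}=1,\\
n+1, & \text{if } A_{ik}=0,
\end{cases}
\qquad
B'_{kj} =
\begin{cases}
k, & \text{if } B_{kj}=1,\\
n+1, & \text{if } B_{kj}=0.
\end{cases}
\]
These are integer matrices with entries in $\{1,\dots,n+1\}$, hence a legal input to \minmax, and building them takes $\Oh{n^2}$ time.

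Next I would observe the key pointwise identity: for every index $k$,
\[
\max(A'_{ik}, B'_{kj}) =
\begin{cases}
k, & \text{if } A_{ik}=B_{kj}=1,\\
n+1, & \text{otherwise,}
\end{cases}
\]
because whenever either Boolean entry is $0$ the corresponding integer entry equals $n+1\geq k$, and whenever both are $1$ the two integer entries are both exactly $k$. Taking the minimum over $k$, the value $\min_k \max(A'_{ik},B'_{kj})$ therefore equals the least witness $k$ with $A_{ik}=B_{kj}=1$ if one exists, and equals $n+1$ if no witness exists. So if $C'$ denotes the \minmax product of $A'$ and $B'$, the \minw product $C$ of $A$ and $B$ is obtained in $\Oh{n^2}$ additional time by setting $C_{ij} = C'_{ij}$ when $C'_{ij} \leq n$ and $C_{ij} = \infty$ otherwise.

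Combining the three steps, a $T(n)$-time algorithm for \minmax yields a $\Oh{T(n) + n^2} = \Oh{T(n)}$-time algorithm for \minw, using $T(n) = \Omega(n^2)$ since merely writing the output of \minmax already costs $\Omega(n^2)$.

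There is essentially no technical obstacle here; the only points that need a little care are (i) replacing the informal value $\infty$ by the concrete bound $n+1$ so that $A'$ and $B'$ remain polynomially bounded integer matrices and a valid \minmax input, and (ii) checking that the ``no common witness'' case of \minw is faithfully signalled by the output value $n+1$. Both are immediate from the displayed identity, which is the heart of the argument.
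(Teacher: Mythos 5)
Your proof is correct and follows essentially the same approach as the paper's: encode $A_{ik}=1$ (resp.~$B_{kj}=1$) as the value $k$ and $A_{ik}=0$ (resp.~$B_{kj}=0$) as a large sentinel, then take the $(\min,\max)$-product. The only (cosmetic) difference is that the paper writes $\infty$ where you use the concrete bound $n+1$; your version is if anything slightly more careful.
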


Our next result states that the All-Pairs Shortest Paths problem in directed unweighted graphs (\uapsp) is also tightly reducible to \minmax. This gives a second intermediate problem that is at least as hard as both \minw and \uapsp.

\begin{restatable}{theorem}{ThmApspToMinmax}
\label{thm:ApspToMinmax}
If \minmax is in $T(n)$ time, then \uapsp is in $\Oh{T(n)\log n}$ time.
\end{restatable}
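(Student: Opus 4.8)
The plan is to compute all pairwise distances by \emph{binary lifting}. For $t = 1, 2, 4, 8, \ldots$ up to the first power of two that is at least $n-1$, I maintain the $n\times n$ matrix $D^{(t)}$ of \emph{$t$-truncated distances}, where $D^{(t)}_{ij}$ equals the distance from $i$ to $j$ if that distance is at most $t$, and a sentinel value $\infty := 2n$ otherwise. The base matrix $D^{(1)}$ is read directly off the adjacency matrix ($0$ on the diagonal, $1$ on each edge, $\infty$ elsewhere), and as soon as $t \geq n-1$ the matrix $D^{(t)}$ is exactly the distance matrix we want to output. There are only $\Oh{\log n}$ values of $t$, so it is enough to implement one \emph{doubling step} --- producing $D^{(2t)}$ from $D^{(t)}$ --- with a single call to \minmax plus $\Oh{n^2}$ extra bookkeeping; the total running time is then $\Oh{T(n)\log n}$, using $T(n) = \Omega(n^2)$.

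The main point to get right, and the step I expect to be the real obstacle, is choosing an encoding so that \emph{one} Min-Max product recovers the \emph{exact} distance in the doubled radius rather than losing one bit per iteration. The naive choice $D^{(2t)}_{ij} := \min_k \max\bigl(D^{(t)}_{ik}, D^{(t)}_{kj}\bigr)$ does not work, since splitting a shortest path of length $d$ at its midpoint gives $\max(\lceil d/2\rceil,\lfloor d/2\rfloor)=\lceil d/2\rceil$, so the product returns $\lceil d(i,j)/2\rceil$, not $d(i,j)$. Instead I would use an \emph{asymmetric doubled encoding}: from $D^{(t)}$ build two matrices $E$ and $O$ with $E_{ij} = 2\,D^{(t)}_{ij}$ and $O_{ij} = 2\,D^{(t)}_{ij} - 1$ on finite entries (keeping the sentinel $\infty$ on the infinite ones; a stray $-1$ appearing on the diagonal of $O$ may be replaced by $0$, or the product may be taken over the integers directly), compute $C := \minmax(E,O)$, and set $D^{(2t)} := C$.

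Correctness of the doubling step is a short parity case analysis. Fix $i,j$ and write $d := d(i,j)$. If $d \leq 2t$, let $k$ be the vertex at distance $\lfloor d/2\rfloor$ from $i$ along a shortest $i$-to-$j$ path, so $d(i,k)=\lfloor d/2\rfloor \leq t$ and $d(k,j)=\lceil d/2\rceil \leq t$; then $\max(E_{ik},O_{kj}) = \max\bigl(2\lfloor d/2\rfloor,\ 2\lceil d/2\rceil-1\bigr) = d$ for both parities of $d$. For any other vertex $k'$, writing $a=d(i,k')$ and $b=d(k',j)$, the triangle inequality gives $a+b\geq d$; if $a,b\leq t$ then $\max(E_{ik'},O_{k'j}) = \max(2a,2b-1)$, which, being an integer that is at least $\tfrac12(2a+2b-1)=a+b-\tfrac12$, is at least $a+b\geq d$, while if $a>t$ or $b>t$ the corresponding maximum is at least $\infty > d$. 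Hence $C_{ij}=d$. If instead $d > 2t$, then every vertex $k$ has $d(i,k)>t$ or $d(k,j)>t$ (otherwise $d \leq d(i,k)+d(k,j) \leq 2t$), so every term in $\min_k\max(E_{ik},O_{kj})$ is at least $\infty$, and we correctly set $D^{(2t)}_{ij}=\infty$. This closes the induction on $t$: after $\lceil\log_2(n-1)\rceil$ doubling steps $D^{(t)}$ is the distance matrix, and the total cost is $\Oh{\log n}$ \minmax calls on $n\times n$ matrices with entries of magnitude $\Oh{n}$, plus $\Oh{n^2\log n}$ arithmetic, i.e.\ $\Oh{T(n)\log n}$.
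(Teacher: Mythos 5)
Your proof is correct, and it takes a recognizably similar but not identical route to the paper's. Both arguments share the same top-level framework: binary lifting over truncated distance matrices $D^{\leq 2^i}$, and the midpoint-split property of unweighted shortest paths, which lets a carefully encoded $(\min,\max)$-product stand in for the $(\min,+)$-product that naive doubling would require. Where you differ is in the treatment of parity. The paper performs \emph{two} Min-Max products per doubling step --- $E = 2\,(D^{\leq\ell}\owedge D^{\leq\ell})$ for even-length paths and $O = 2\,(D^{\leq\ell}\owedge (D^{\leq\ell}-1))+1$ for odd-length paths --- and then takes the entrywise $\min(E,O)$. You fold both parities into a \emph{single} product by the asymmetric encoding $E_{ij}=2D^{(t)}_{ij}$, $O_{ij}=2D^{(t)}_{ij}-1$, resting on the elementary but slightly different key inequality that for integers $a,b$ one has $\max(2a,\,2b-1) \geq a+b$, with equality exactly when $a\in\{b-1,b\}$ --- i.e.\ exactly at a midpoint of either parity. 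This buys a constant-factor saving (one oracle call per round rather than two) and a somewhat cleaner correctness argument, since you do not need separate lemmas for the even and odd cases; the paper's version is perhaps easier to see is correct ``locally'' since each of its two products matches one parity transparently. Both routes land on the same $\Oh{T(n)\log n}$ bound, and your handling of the sentinel $\infty:=2n$, the $d>2t$ case, and the $-1$ on the diagonal of $O$ is all sound.
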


The above theorem also follows from a recent independent result by Barr, Kopelowitz, Porat and Roditty~\cite{Barr19}. In particular, they reduce All-Pairs Shortest Paths in directed graphs with edge weights from $\{-1,0,1\}$ to \minmax. Interestingly, they use a substantially different approach than ours. While their argument can be seen as inspired by Seidel's algorithm for unweighted APSP in undirected graphs~\cite{seidel}, ours resembles Zwick's algorithm for directed graphs~\cite{Zwick02}.

\paragraph*{Reductions for Convolution Problems.}
Our main result for convolution problems regards the convolution version of $\mono$, which we call \monoconv: Given three integer sequences $a,b,c$, decide for each index $i$ if there exists $j$ such that $a_j = b_{i-j} = c_i$. We show that \monoconv is actually fine-grained equivalent to $3$SUM. 

\begin{restatable}{theorem}{ThmTsumToConv}
\label{thm:3sumToConv}
If \monoconv is in $\Oh{n^{3/2-\eps}}$ time, then $3$SUM is in (randomized) $\Ot{n^{2-\frac{4}{3}\eps}}$ time.
\end{restatable}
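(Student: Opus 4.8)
The plan is to route the reduction through Convolution-$3$SUM, the problem of deciding whether an array $A[1..n]$ has indices with $i+j=k$ and $A[i]+A[j]=A[k]$. By classical randomized reductions between $3$SUM and its convolution variant, $3$SUM on $N$ numbers reduces, with $\Ot{N}$ overhead, to $\Ot{1}$ instances of Convolution-$3$SUM on arrays of length $\Ot{N}$, and one may in addition assume the numbers lie in a polynomially bounded universe. So it suffices to reduce Convolution-$3$SUM on a length-$n$ array to \monoconv on sequences of length $\Ot{n^{4/3}}$, with all other work $\Ot{n^{4/3}}$: then a $\Oh{m^{3/2-\eps}}$ algorithm for \monoconv solves Convolution-$3$SUM, hence $3$SUM, in $\Ot{(n^{4/3})^{3/2-\eps}}=\Ot{n^{2-\frac43\eps}}$ time. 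The exponent $4/3$ is exactly the right target because $(n^{4/3})^{3/2}=n^2$, i.e.\ a single \monoconv instance of that size is ``as expensive as'' brute-force Convolution-$3$SUM, so any polynomial speedup for \monoconv transfers through with the stated loss.

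The heart of the argument is an encoding gadget that turns the \emph{additive} condition ``$A[i]+A[j]=A[k]$ with $i+j=k$'' into the \emph{monochromatic} condition ``$a_{j'}=b_{i'-j'}=c_{i'}$''. The steps are: (i)~apply a random linear hash modulo a carefully chosen prime $p$ to replace the array values by residues in $[0,p)$, so that every genuine solution survives the hash (up to the usual ``wrap-around by $p$'' case analysis) while spurious ones are filtered out later; (ii)~blow up each array position into a bounded number of copies indexed by a \emph{guess} of the residue it is to be matched against, and set the colors so that the three-way equality $a_{j'}=b_{i'-j'}=c_{i'}$ holds precisely when the guesses are mutually consistent \emph{and} the residue equation holds, while the index identity $j'+(i'-j')=i'$ simultaneously forces $i+j=k$; (iii)~for every index flagged ``yes'' by \monoconv, read off the witness data encoded in that index and verify the \emph{original} (unhashed) equation. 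The randomness of the statement enters through the hash, and through the standard linear-hashing toolkit for $3$SUM (bounding bucket loads, controlling the additive ``carry'' slack relating $h(x)+h(y)$ to $h(z')$).

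The main obstacle is the parameter balance, together with keeping the gadget's blow-up small: the modulus $p$ and the size of the guess range determine simultaneously the length of the \monoconv instance (hence the cost of the \monoconv call and of building it), the number of spurious ``yes'' indices, and the cost of verifying each of them against the original array; a too-generous gadget (e.g.\ one that lets both matched values be guessed freely on the output side) inflates the instance by an extra factor of $p$ and breaks the bound, so one must transmit only the minimal information across the convolution and arrange the index arithmetic so the auxiliary coordinates do not collide with the positional part $i+j=k$. Once the gadget is designed so that its length is $\Ot{n^{4/3}}$ and the modulus/guess parameters (both on the order of a small power of $n$) are tuned so that construction, the \monoconv call, and verification each land within $\Ot{n^{2-\frac43\eps}}$, the remaining running-time bookkeeping is routine.
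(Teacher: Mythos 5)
Your proposal takes a genuinely different route from the paper's, and it has a gap that the tools you've set up don't close. The paper does \emph{not} pass through Convolution-$3$SUM; instead it first applies the $3$SUM self-reduction (Lemma~\ref{lem:3sumself}) to split the length-$n$ instance into $\Oh{n^{2/3}}$ sub-instances of $\Oh{n^{2/3}}$ integers each, hashes each sub-instance's \emph{values} into a universe of size $R=n^{4/3}$ (Lemma~\ref{lem:BaranHashing}), and then packs all the sub-instances into $\mathrm{polylog}(n)$ many \monoconv instances of length $\Oh{n^{4/3}}$ using colors to keep the sub-instances separate and random shifts to avoid collisions. The crucial structural point is that the additive constraint $a+b=c$ is encoded in the \emph{index arithmetic} of the convolution (a nonzero entry is placed at position $h(a)$, etc.), while the color only records which sub-instance the entry came from. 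This works because each sub-instance is \emph{sparse}: it contributes only $\Oh{n^{2/3}}$ nonzero coordinates to a length-$n^{4/3}$ vector, so the total density over all $\Oh{n^{2/3}}$ sub-instances is $\Oh{1}$, and the per-coordinate false-positive probability $\Oh{(n^{2/3})^2/R}=\Oh{1}$ can be driven to inverse-polynomial by $\Oh{\log n}$ independent hash functions, with no verification phase at all.

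Your plan instead starts from a single Convolution-$3$SUM array of length $n$, and this is where the argument breaks. In \monoconv the index identity $j'+(i'-j')=i'$ is the only additive degree of freedom available, and you spend it on $i+j=k$; the value constraint $A[i]+A[j]=A[k]$ must then live in the colors or in an auxiliary block of the index. Colors can enforce only a three-way \emph{equality}, not an additive relation, so they cannot carry it; and pushing a hash residue of the value into the index multiplies the instance length by the modulus $p$, not the ``bounded number of copies'' you describe. Keeping the \monoconv length at $\Ot{n^{4/3}}$ forces $p=\Ot{n^{1/3}}$, but then among the $\Theta(n)$ pairs $(i,j)$ with $i+j=k$ the expected number of hash coincidences is $\Theta(n/p)=\Theta(n^{2/3})$, so essentially every output coordinate is a false positive, and repeating with independent hashes does not help because the per-coordinate error probability is not bounded away from $1$. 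The fallback of verifying flagged indices is also unavailable: \monoconv is purely decisional and returns no witness $j'$, and even if it did, a flagged $i'$ could have its reported witness be spurious while a genuine solution sits behind another $j'$ at the same output coordinate. The missing ingredient is exactly the self-reduction to many \emph{sparse} sub-instances, which simultaneously tames the false-positive rate and frees the colors to do instance bookkeeping rather than value bookkeeping.
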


\begin{restatable}{theorem}{ThmConvToTsum}
\label{thm:ConvTo3sum}
If $3$SUM is in $\Oh{n^{2-\eps}}$ time, then \monoconv is in $\Ot{n^{3/2-\eps/(8-2\eps)}}$ time.
\end{restatable}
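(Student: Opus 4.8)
The plan is to reduce \monoconv to $3$SUM by a colour‑by‑colour decomposition. For a colour~$v$ put $A_v=\{j:a_j=v\}$, $B_v=\{k:b_k=v\}$, $C_v=\{i:c_i=v\}$, and $m_v=|A_v|+|B_v|+|C_v|$, so $\sum_v m_v=3n$. An index~$i$ lies in a monochromatic convolution triple exactly when $i\in(A_{c_i}+B_{c_i})\cap C_{c_i}$, where $A_v+B_v$ is the sumset; hence \monoconv amounts to computing $S_v:=(A_v+B_v)\cap C_v$ for every colour~$v$. I would fix a weight threshold~$M$. For each \emph{heavy} colour ($m_v>M$) I compute $A_v+B_v$ with one length‑$\Oh{n}$ FFT and intersect with $C_v$; since there are at most $3n/M$ heavy colours this costs $\Ot{n^2/M}$, which is exactly the specialisation of the known $\Ot{n^{3/2}}$ algorithm to heavy colours. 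The \emph{light} colours ($m_v\le M$) are where the $3$SUM oracle is used.

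The key gadget packs many light colours into one $3$SUM instance by placing each colour in disjoint bands. Index the light colours by~$\ell$ and set $X=\bigcup_\ell(A_{v_\ell}+\ell P)$, $Y=\bigcup_\ell(B_{v_\ell}+\ell Q)$, $Z=\bigcup_\ell(C_{v_\ell}+\ell(P+Q))$, with $P>2n$ and $Q=(t{+}1)P$ for a batch of $t$ colours. Any solution $x+y=z$ then yields $(\ell_1-\ell_3)P+(\ell_2-\ell_3)Q=i-j-k$ for the three band indices involved; the left side is $0$ or at least~$P>2n$ in absolute value while $|i-j-k|\le 2n$, so it must vanish, forcing $\ell_1=\ell_2=\ell_3=:\ell$ and $j+k=i$ with $a_j=b_k=c_i=v_\ell$ --- i.e.\ a genuine monochromatic triple; the converse is immediate. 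Grouping the light colours into batches of total weight $\Theta(M)$ gives $\Oh{n/M}$ instances of size $\Oh{M}$, and one oracle call per batch (cost $\Oh{M^{2-\eps}}$) decides whether that batch contains a colour with $S_v\ne\emptyset$.

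The step I expect to be the real obstacle is that \monoconv is an \emph{all‑edges} problem whereas $3$SUM only decides existence: I must output all of $S_v$, not just whether it is nonempty. The natural fix is witness extraction --- on a batch whose oracle call says ``yes'', binary‑search over the target bands ($Z$) to pull out one element of some $S_v$, record it, delete it, and repeat --- so that each reported index costs $\Oh{\log n}$ oracle calls on instances of size $\Oh{M}$ (only $Z$ shrinks). Done naively this reporting costs about $\Ot{n\cdot M^{2-\eps}}$ since there are at most~$n$ monochromatic indices in total, and balancing $\Ot{n\,M^{2-\eps}+n^2/M}$ does not yet reach the target. To land exactly on $\frac32-\frac{\eps}{8-2\eps}=\frac{2(3-\eps)}{4-\eps}$ one has to be cleverer about reporting: separate the colours that generate many monochromatic indices (these are few, so an FFT per such colour is affordable) from those that generate few (for which witness extraction is cheap), introduce a second threshold, and then balance the FFT cost, the oracle‑decision cost, and the oracle‑reporting cost simultaneously. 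Verifying that the witness‑extraction overhead is genuinely controlled and that this three‑way trade‑off produces the claimed exponent is the delicate part of the argument.
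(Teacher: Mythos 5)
Your heavy/light split on colours, and the FFT treatment of the heavy ones, match the paper. The divergence is in how the light colours are handled, and there your proposal has a genuine gap that you already flag yourself: the step from a \emph{decision} $3$SUM oracle to an all-edges output.

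The missing ingredient is the known subquadratic-preserving reduction from the \emph{All-Integers} variant of $3$SUM to the decision variant (Lemma~\ref{lem:allints3sum}, due to Vassilevska Williams and Williams~\cite{Subcubic}): if $3$SUM is in $\Oh{n^{2-\eps}}$ time, then All-Integers $3$SUM is in $\Oh{n^{2-\eps/2}}$ time. Once you have this, the light colours become straightforward. For each light colour $v$ you call All-Integers $3$SUM once on the index sets $(A_v,B_v,C_v)$ of size $n_v \leq 3n/t$; the algorithm reports exactly the indices $i \in S_v$, i.e.\ all the witnesses, not just a YES/NO answer. There is no need for witness extraction at all, and consequently no need for the banded-batching gadget either -- you simply solve each light colour separately and use subadditivity:
\[\sum_i n_i^{2-\eps/2} = \sum_i n_i \cdot n_i^{1-\eps/2} \leq 3n \cdot (3n/t)^{1-\eps/2}.\]
Balancing this against the FFT cost $\Ot{tn}$ by setting $t = n^{1/2 - \eps/(8-2\eps)}$ lands exactly on the claimed exponent. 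By contrast, your witness-extraction route pays $\Ot{M^{2-\eps}}$ per reported index rather than $\Oh{n_i^{2-\eps/2}}$ once per colour, which, as you correctly compute, does not balance to the target; and the ``three-way trade-off'' you gesture at to repair this is left unverified. So the proposal is incomplete as written: the argument you are missing is precisely the reduction from All-Integers $3$SUM to $3$SUM, which dissolves the obstacle you identified rather than working around it.
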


This equivalence is arguably the first fine-grained equivalence between natural problems with \emph{different} running time complexities: $\monoconv$ is a problem in $\Oh{n^{3/2}}$ time, whereas $3$SUM is in $\Oh{n^2}$ time, and a polynomial improvement on one of these running times would result in a polynomial improvement over the other. All previous fine-grained equivalences were between problems with the same running time exponent: the problems equivalent to APSP~\cite{Subcubic,AbboudGW15} are all solvable in $\Oh{N^{1.5}}$ time where $N$ is the size of their input, the problems equivalent to Orthogonal Vectors~\cite{ChenW19} or to $(\min,+)$-convolution~\cite{Cygan19} are all in quadratic time, the problems equivalent to CNF-SAT~\cite{CyganDLMNOPSW16} are all in $\Oh{2^n}$ time, etc. 
While tight fine-grained reductions between problems with different running times are well-known, there was no such equivalence until our result, largely since it often seems difficult to reduce a problem with a smaller asymptotic running time to one with a larger running time, something our Theorem~\ref{thm:ConvTo3sum} overcomes.
Note that the same apparent difficulty is overcome by the reduction from \mono to \sparse in Theorem~\ref{thm:MonoToSparse}, as well as by the reductions from \minw and \uapsp to \sparse, which follow from combining Theorems~\ref{thm:MinwToMono} and~\ref{thm:ApspToMono} with  Theorem~\ref{thm:MonoToSparse}.

Theorem~\ref{thm:ConvTo3sum} together with Theorem~\ref{thm:3sumToMono} give a reduction from \monoconv to \mono. Previously reductions from a convolution to the corresponding graph/matrix problem were known only for problems with best known algorithms running in brute-force time, i.e.~quadratic time for convolution and cubic time for product, e.g.~$(\min,+)$-convolution reduces to $(\min,+)$-product~\cite{convolutionMinPlus}.

Finally, we relate \monoconv to an unweighted variant of a coin change problem~\cite{Wright1975,Kunnemann17} that is related to the minimum word break problem~\cite{Bringmann17,Chan20}. Given a set of coin values from $\{1,2,\ldots,n\}$, the \coin problem asks to determine for each integer value up to $n$ what is the minimum number of coins (allowing repetitions) that sum to that value. We reduce \coin to \monoconv with only a polylogarithmic overhead.
A simple algorithm solves \coin in $\Ot{n^{3/2}}$ time~\cite{Bringmann19Coin}, and our reduction implies that any improvement over the known running times of \monoconv or $3$SUM would also improve over the above running time for \coin. Following the publication of the conference version of this paper, Chan and He~\cite{Chan20} gave a faster $\Ot{n^{4/3}}$ time algorithm for \coin. Therefore, our reduction is no longer tight with respect to the best currently known running times. In order to improve over Chan and He's running time using our reduction one would need an $\Oh{n^{4/3-\eps}}$ time algorithm for \monoconv.

\begin{restatable}{theorem}{ThmCoinToConv}
\label{thm:CoinToConv}
If \monoconv is in $T(n)$ time, then \coin is in $\Oh{T(n)\log^2 n}$ time.
\end{restatable}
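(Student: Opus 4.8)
The plan is to emulate a standard $\Ot{n^{3/2}}$ algorithm for \coin, replacing its core step (a bounded $(\min,+)$-self-convolution) by a polylogarithmic number of \monoconv calls. Write $f(v)$ for the minimum number of coins summing to $v$, with $f(0)=0$ and $f(v)=\infty$ if $v$ is unreachable; \coin asks for $f(v)$, $0\le v\le n$, and $f(v)\le n$ whenever finite. The combinatorial engine is a \emph{splitting lemma}: if $f(v)=k<\infty$ then for every $0\le s\le k$ there is a decomposition $v=u+w$ with $f(u)=s$, $f(w)=k-s$ — sort an optimal multiset for $v$, cut after the $s$ smallest coins, and use subadditivity $f(v)\le f(u)+f(w)$ to force equality. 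Define the truncation $f^{\le t}$ by $f^{\le t}(v)=f(v)$ if $f(v)\le 2^t$ and $f^{\le t}(v)=\infty$ otherwise. The half-split ($s=\lfloor k/2\rfloor$) together with subadditivity gives, for every $v$,
\[ f^{\le t}(v)\;=\;\min_{u+w=v}\bigl(f^{\le t-1}(u)+f^{\le t-1}(w)\bigr), \]
since a half-split of an optimal solution witnesses "$\le$" when $f(v)\le 2^t$, subadditivity gives "$\ge$", and when $f(v)>2^t$ every split has a part of $f$-value $>2^{t-1}$ hence of $f^{\le t-1}$-value $\infty$. As $f=f^{\le L}$ for $L=\lceil\log_2 n\rceil$ and $f^{\le 0}$ is immediate, \coin reduces to $\Oh{\log n}$ bounded $(\min,+)$-self-convolutions, the operands at step $t$ taking values in $\{0,\dots,2^{t-1}\}\cup\{\infty\}$.

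The crux is to perform one such step with $\Oh{\log n}$ \monoconv calls. Fix $\alpha=f^{\le t-1}$ (values $\le 2^{t-1}$) and $\gamma=f^{\le t}$; I would determine $\gamma(v)$ for all $v$ at once, bit by bit from the top. Suppose we already know that $\gamma(v)\in[A(v),A(v)+2^{p+1})$ with $A(v)$ a multiple of $2^{p+1}$. Then the bit of $\gamma(v)$ at position $p$ is $0$ iff $\gamma(v)<A(v)+2^p$, i.e.\ iff some split $u+w=v$ satisfies $\alpha(u)+\alpha(v-u)<A(v)+2^p$ (the lower endpoint being automatic since $\gamma(v)\ge A(v)$ is already known). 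By the splitting lemma the half-split of a minimizer has halves $\lfloor\gamma(v)/2\rfloor$ and $\lceil\gamma(v)/2\rceil$; when $\gamma(v)$ is in the lower half these both lie in the length-$2^{p-1}$ dyadic block $[A(v)/2,A(v)/2+2^{p-1})$, except for the single value $\gamma(v)=A(v)+2^p-1$, whose halves are the consecutive integers $A(v)/2+2^{p-1}-1$ and $A(v)/2+2^{p-1}$. I would detect the first case by a \monoconv call with $a_u=b_u=\lfloor\alpha(u)/2^{p-1}\rfloor$ and $c_v=A(v)/2^p$ (it returns, for each $v$, whether $\exists u:\ \lfloor\alpha(u)/2^{p-1}\rfloor=\lfloor\alpha(v-u)/2^{p-1}\rfloor=c_v$), and the edge value by a \monoconv call with $a_u=\alpha(u)$, $b_w=\alpha(w)-1$, $c_v=A(v)/2+2^{p-1}-1$. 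A "YES" from either call exhibits a genuine split of sum $<A(v)+2^p$, so there are no false positives, and the lemma rules out false negatives; $\infty$-entries of $\alpha$ are replaced by a large sentinel so they never match anything.

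After sweeping the $\Oh{\log n}$ scales we hold a candidate $\hat\gamma(v)$ for each $v$. Two further \monoconv calls verify it: one for $\hat\gamma(v)$ even, testing $\exists u:\ \alpha(u)=\alpha(v-u)=\hat\gamma(v)/2$ (i.e.\ $a_u=b_u=\alpha(u)$, $c_v=\hat\gamma(v)/2$), and one for $\hat\gamma(v)$ odd, testing $\exists u:\ \alpha(u)=\alpha(v-u)-1=\lfloor\hat\gamma(v)/2\rfloor$. If neither verifies, then no half-split of value $\hat\gamma(v)$ exists, which (since $f^{\le t-1}$ is $\infty$ outside $[0,2^{t-1}]$) forces $f^{\le t}(v)=\infty$; this is exactly the information the lower-bound-only bit search misses, as it cannot by itself tell $\gamma(v)=2^t$ from $\gamma(v)=\infty$. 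Summing up: the base case and the $\Oh{\log^2 n}$ array manipulations cost $\Oh{n\log^2 n}$, and there are $\Oh{\log n}$ truncation steps, each using $\Oh{\log n}$ \monoconv calls on length-$\Oh{n}$ sequences, for a total of $\Oh{T(n)\log^2 n}$.

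I expect the reduction of one bounded $(\min,+)$-self-convolution to \monoconv to be the main obstacle, and the place where the write-up needs the most care. \monoconv only tests equality of three labels, while the self-convolution asks whether a \emph{sum} of two values lands in an \emph{interval}; the whole scheme therefore hinges on the splitting lemma, which turns "the minimum is small" into "there is a balanced split whose two truncated halves lie in prescribed dyadic blocks." Getting the dyadic bookkeeping right (in particular the boundary value $A(v)+2^p-1$ and the scale $p=0$), and threading the $\infty$/unreachable entries through every call, is the fiddly part; everything else — the splitting lemma, the truncation recurrence, and the cost accounting — is routine.
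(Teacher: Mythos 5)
Your proposal takes essentially the same route as the paper: truncate $f$ at powers of two, use the half-split/subadditivity lemma to get the doubling recurrence $f^{\le t}(v)=\min_{u+w=v}(f^{\le t-1}(u)+f^{\le t-1}(w))$, and implement each doubling step by a parallel binary search over dyadic intervals realized as \monoconv calls, for $\Oh{\log n}\times\Oh{\log n}$ calls total. The paper's binary search iterates an index $\ell$ from $i{+}2$ down to $0$, uses a single FFT to obtain the $\infty$-mask at the top level, and packs both the ``aligned'' and the ``boundary'' witness into one \monoconv call per level by giving $a$ and $b$ intervals of the same length but with the right endpoint open in $a$ and closed in $b$ (entries outside any interval get the sentinel $-1$). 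You instead spend two \monoconv calls per bit (one for the aligned dyadic block, one for the off-by-one boundary value) and defer the $\infty$-vs-finite distinction to a verification pass at the end rather than an FFT at the start; both variants give $\Oh{\log^2 n}$ calls. The points you flag as fiddly (the $p=0$ scale, the boundary value, and threading $\infty$ through sentinels) are exactly the bookkeeping the paper resolves with its asymmetric-interval coloring, and your version admits the obvious fixes (at $p=0$ only the aligned call is needed, and $\hat\gamma(v)>2^t$ already certifies $\gamma(v)=\infty$), so the argument is sound.
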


\section{Preliminaries}

In this section we first recall formal definitions of all the problems involved in the reductions presented in the paper. We split these problems by their time complexity. At the end of the section we recall the property of self-reducibility of $3$SUM.

\subsection{Problems in \texorpdfstring{$\Ot{n^{(3+\omega)/2}}$}{O(n\^{}((3+omega)/2))} time}

\begin{definition}[All-Edges Monochromatic Triangle, \mono]
Given an $n$-node graph $G$ in which each edge has a color from $1$ to $n^2$, decide for each edge whether it belongs to a \emph{monochromatic} triangle, a triangle where all three edges have the same color. 
\end{definition}

\begin{definition}[Min-Max matrix product, \minmax]
Given two $n \times n$ matrices $A$ and $B$, compute matrix $C$ such that
\[C_{ij} = \min_k \max(A_{ik}, B_{kj}).\]
\end{definition}

\begin{definition}[Minimum Witness matrix product, \minw]
Given two $n \times n$ Boolean matrices $A$ and $B$, compute matrix $C$ such that
\[C_{ij} = \min (\{k \mid A_{ik} = B_{kj} = 1\} \cup \{\infty\}).\]
\end{definition}

\begin{definition}[All-Pairs Shortest Paths in directed unweighted graphs, \uapsp] 
Given an $n$-node unweighted directed graph $G=(V,E)$, compute for each pair of vertices $u, v \in V$ the length of a shortest path from $u$ to $v$.
Note that all path lengths will be in $\{0,1,\ldots,n-1\}\cup \{\infty\}$.
\end{definition}

\subsection{Problems in \texorpdfstring{$\Oh{m^{2\omega/(\omega+1)}}$}{O(m\^{}(2omega/(omega+1)))} time}

\begin{definition}[All-Edges Sparse Triangle, \sparse]
Given an $m$-edge graph $G$ decide for each edge whether it belongs to a triangle.
\end{definition}

\subsection{Problems in \texorpdfstring{$\Oh{n^2}$}{quadratic} time}

\begin{definition}[$3$SUM]
Given three lists, $A$, $B$ and $C$, of $n$ integers, determine if there exist $a\in A$, $b \in B$, and $c\in C$ such that $a+b=c$.
\end{definition}

Let us note that the $3$SUM problem is defined in several different ways in literature. They differ as to whether the input is split into three list or all the numbers are in a single list, and whether one looks for $a+b=c$ or $a+b+c=0$. All these variants are equivalent by simple folklore reductions.

\subsection{Problems in \texorpdfstring{$\Ot{n^{1.5}}$}{O(n\^{}1.5)} time}

\begin{definition}[\monoconv]
Given three sequences $a,b,c$, all of length $n$, compute the sequence $d$ such that
\[d_i = \begin{cases} 1 &\text{if } \exists_j\ a_j = b_{i-j} = c_i,\\
0 & \text{otherwise}.\end{cases}\]
\end{definition}

\subsection{Problems in \texorpdfstring{$\Ot{n^{4/3}}$}{O(n\^{}(4/3))} time}

\begin{definition}[\coin]
Given a set of coin values $C \subseteq \{1,2, \ldots,n\}$, assume you have for each $c \in C$ an infinite supply of coins of value $c$, and determine for each $v \in   \{1,2, \ldots,n\}$ the minimum number of coins that sums up to $v$.
\end{definition}

\coin can be easily solved in $\Ot{n^{1.5}}$ time~\cite{Bringmann19Coin}.
The algorithm splits the coins into heavy coins, with weight at least $\sqrt{n}$, and light coins, with weight less than $\sqrt{n}$. The minimum sum for a value can use at most $\sqrt{n}$ heavy coins. By running FFT $\sqrt{n}$ times the algorithm produces a vector with the minimum number of heavy coins needed to sum to every value. That takes $\Oh{n^{1.5} \log n}$ time in total. 
Then a classical dynamic programming algorithm is run for the $\sqrt{n}$ light coins and $n$ values, in $\Oh{n^{1.5}}$ time. 

For a more involved $\Ot{n^{4/3}}$ time algorithm refer to~\cite{Chan20}.

\subsection{Self-reducibility of 3SUM}

In our proofs of Theorems~\ref{thm:3sumToMono} and~\ref{thm:3sumToConv} we use the following fact about $3$SUM.

\begin{lemma}
\label{lem:3sumself}
For any $\alpha \in [0, 1]$, a single instance of $3$SUM of size $n$ can be reduced to 
$\Oh{n^{2\alpha}}$ instances of $3$SUM of size $\Oh{n^{1-\alpha}}$ each. The reduction runs in time linear in the total size of produced instances, and the original instance is a yes-instance if and only if at least one of the produced instances is a yes-instance.
\end{lemma}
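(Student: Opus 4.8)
The plan is a standard \emph{bucketing self-reduction} driven by an almost-linear hash function. Let $R$ be the smallest power of two with $R \ge n^\alpha$, so $R = \Th{n^\alpha}$, and set $r = \log_2 R$. Pick a random odd multiplier $\sigma$ and let $h(x) = \lfloor (\sigma x \bmod 2^w)/2^{w-r} \rfloor$ be the top $r$ bits of $\sigma x \bmod 2^w$ (here $w$ is the word size); this is exactly the multiply-shift hash family. Two properties are needed. First, \emph{almost-linearity}: for all integers $x,y$, $h(x+y) \equiv h(x)+h(y)+\delta \pmod R$ for some $\delta \in \{0,1\}$. This is the familiar carry argument: $(\sigma x \bmod 2^w) + (\sigma y \bmod 2^w)$ equals $(\sigma(x+y) \bmod 2^w)$ plus $0$ or $2^w$, and taking the top $r$ bits shifts this by $0$ or $2^r \equiv 0 \pmod R$, while the floor operations in the truncation contribute an extra additive error in $\{0,1\}$. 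Second, \emph{balance}: over a random $\sigma$, each of the $R$ buckets captures $\Oh{n/R} = \Oh{n^{1-\alpha}}$ elements of $A$, of $B$, and of $C$, in expectation and --- once $n^{1-\alpha}$ dominates $\log n$ --- with high probability.

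Given this, the reduction is immediate. Evaluate $h$ on the $3n$ inputs and split each list into buckets, $A = \bigsqcup_{k} A_k$ with $A_k = \{a \in A : h(a) = k\}$, and likewise $B_k$, $C_k$. For every pair $(k_1,k_2) \in \{0,\dots,R-1\}^2$ output one $3$SUM sub-instance with first list $A_{k_1}$, second list $B_{k_2}$, and third list $C_{(k_1+k_2) \bmod R} \cup C_{(k_1+k_2+1) \bmod R}$. There are $R^2 = \Oh{n^{2\alpha}}$ sub-instances, each on $\Oh{n^{1-\alpha}}$ integers; summing, the total output size is $\Th{Rn} = \Th{n^{1+\alpha}}$, and the work ($\Oh{n}$ for hashing, $\Oh{n^{1+\alpha}}$ for copying and for enumerating the sub-instances) is linear in that.

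Correctness does not use the randomness. Every sub-instance is obtained by \emph{deleting} elements from $A$, $B$, $C$ and never altering a number, so a witness $a+b=c$ inside a sub-instance is a witness in the original instance. Conversely, if $a+b=c$ in the original instance, put $k_1 = h(a)$, $k_2 = h(b)$; by almost-linearity $h(c) = h(a+b) \equiv k_1+k_2+\delta \pmod R$ with $\delta \in \{0,1\}$, so $c$ lies in the third list of the $(k_1,k_2)$ sub-instance and that sub-instance is a yes-instance. Hence the original is a yes-instance iff some produced instance is.

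The one delicate point is the \emph{maximum} bucket load in property (ii): multiply-shift is universal but not highly independent, so either one invokes the standard max-load bound for such hash families, or one keeps only the expected bound $\Oh{n^{1-\alpha}}$ and runs the reduction in Las Vegas fashion --- measure the bucket sizes, and if any bucket (hence sub-instance) overflows, subdivide it or rehash with fresh $\sigma$; since the total mass is $\Oh{n}$, this costs only constant expected factors. In the applications, Theorems~\ref{thm:3sumToMono} and~\ref{thm:3sumToConv}, the exponent $\alpha$ is a fixed constant bounded away from $1$, so $n^{1-\alpha}$ dominates $\log n$ and the clean high-probability bound applies --- which is precisely why those reductions produce \emph{randomized} $3$SUM algorithms. (A fully deterministic variant would replace $h$ by $x \mapsto \lfloor x/D \rfloor \bmod R$, which remains almost-linear for every $D$, but one would then have to choose $D$ to balance the specific input lists, reintroducing the search that randomization circumvents.)
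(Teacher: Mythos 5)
The paper doesn't prove Lemma~\ref{lem:3sumself} at all: it states the result and cites two known proof strategies, the randomized almost-linear hashing reduction of Baran, Demaine, and P{\v{a}}tra{\c{s}}cu, and a deterministic alternative via a ``domination'' argument that sorts and partitions the inputs into consecutive chunks. Your argument reconstructs the first of these. The overall structure is right and most details are fine: multiply-shift is almost-linear, and your carry derivation is the right one (note that the paper's own Lemma~\ref{lem:BaranHashing} allows the wider slack $\{-1,0,1\}$, which is safest when inputs may be negative; your third list would then be a union of three $C$-buckets rather than two --- this is cosmetic); the bucketing into $R=\Theta(n^\alpha)$ classes, the $R^2$ enumeration of $(k_1,k_2)$ pairs, the $\Theta(Rn)=\Theta(n^{1+\alpha})$ total output size, and the ``deletion preserves no-instances, almost-linearity catches yes-witnesses'' correctness argument are all correct. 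Noting that this makes the downstream reductions randomized is also the right observation.

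The one genuinely delicate step, which you flag but do not close, is the max-load bound. Multiply-shift is only (approximately) $2$-universal, so pairwise independence gives an \emph{expected} bucket load of $O(n/R)$ but not a high-probability bound on every bucket simultaneously. Your proposed fixes are not airtight: a single rehash with a fresh $\sigma$ is not obviously good with probability bounded away from zero under mere pairwise independence, and naive subdivision of oversized buckets can blow the instance \emph{count} past $O(n^{2\alpha})$ --- the number of sub-instances becomes a sum of products of piece counts over $(k_1,k_2)$, and if a few buckets carry most of the mass the cubic term can dominate. The standard way to close this, and the one actually used in~\cite{Baran08}, is an overflow argument: fix a threshold $cn/R$, discard elements that land in buckets exceeding it, bound the expected number of discarded elements by $O(n/c)$ using the expected collision count $\sum_k \binom{|A_k|}{2} = O(n^2/R)$, and then brute-force or recurse on the overflow, which decays geometrically. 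With that step supplied, your proof is a correct rendering of the cited randomized reduction.
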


This fact appears in many $3$SUM-related papers, e.g.~\cite{Baran08,Hajiaghayi19,Kopelowitz16,Lincoln16,Patrascu10}. In~\cite{Baran08} it was proved using a randomized almost linear hashing scheme~\cite{Dietzfelbinger96}. An alternative proof -- using a domination argument to provide a deterministic reduction -- appeared, e.g., in~\cite{Lincoln16,Gronlund18}, and is based on ideas of~\cite{Czumaj07Tri}.

\section{Reductions for Graph and Matrix Problems}

First, let us recall the algorithm of Vassilevska, Williams and Yuster~\cite{Hsubgraphs} for \mono. We rephrase the argument so that it not only shows how to solve \mono in $\Oh{n^{(3+\omega)/2}}$ time, but also proves that any polynomial improvement over the $\Oh{m^{2\omega/(\omega+1)}}$ time algorithm of Alon, Yuster and Zwick~\cite{Alon97} for \sparse translates to a polynomial improvement for \mono.

\ThmMonoToSparse*
\begin{proof}
Assume \sparse is in $\Oh{m^\alpha}$ time. Take an \mono instance. For each color consider the subgraph composed of all the edges of that color. Each such subgraph constitutes an independent instance of \sparse. However, simply using the $\Oh{m^\alpha}$ time algorithm on all of these instances is not efficient enough. Intuitively, some of the instances might be too dense.

Instead, for a parameter $t$ to be determined later, take the $t$ largest subgraphs (in terms of the number of edges). For each of them solve the problem by using fast matrix multiplication to compute the square of the adjacency matrix. This takes $\Oh{tn^\omega}$ time in total. Let $m_i$ denote the number of edges in the $i$-th of the remaining subgraphs. Clearly, $\forall_i\ m_i \leq n^2/t$, and $\sum_i m_i \leq n^2$. On each of those subgraphs use the $\Oh{m^\alpha}$ time \sparse algorithm. This takes an order of
\[\sum_i m_i^\alpha = \sum_i m_i\cdot m_i^{\alpha-1} \leq \sum_i m_i\cdot(n^2/t)^{\alpha-1} \leq n^2 \cdot(n^2/t)^{\alpha-1}\]
time. The total runtime is thus $\Oh{tn^\omega + n^{2\alpha}/t^{\alpha-1}}$. Optimize by setting $t=n^{(2\alpha-\omega)/\alpha}$, and get an $\Oh{n^{\omega + 2 - (\omega / \alpha)}}$ time.

Observe that for $\alpha = 2\omega/(\omega + 1)$ the runtime is $\Oh{n^{(3+\omega)/2}}$. Moreover, for $\alpha < 2\omega/(\omega + 1)$ the exponent in the runtime becomes strictly smaller.
\end{proof}

Now, we proceed to show how to use \mono to solve two popular intermediate problems. We start with \minw, and reduce a single instance of that problem to $\log n$ instances of \mono.

\ThmMinwToMono*
\begin{proof}
The main idea is to use a parallel binary search. For each entry of the output matrix $C$ we will keep an interval which that entry is guaranteed to lie in. With a single call to \mono we will be able to halve all the intervals.

W.l.o.g.~assume the last column of $A$ and last row of $B$ are all ones, so that the output is always finite. For $\ell\in[\log n]$, let $C^{(\ell)}$ denote the matrix pointing to $2^\ell$-length intervals in which entries of $C$ lie, that is $C^{(\ell)}_{ij}$ is the unique integer such that $C_{ij} \in \big[2^\ell \cdot C^{(\ell)}_{ij}, 2^\ell \cdot (C^{(\ell)}_{ij} + 1)\big)$.

We will compute $C^{(\ell)}$ for $\ell=\lceil \log n \rceil,\ldots,1,0$. Observe that $C^{(\lceil \log n \rceil)}$ is the zero matrix. Knowing $C^{(\ell+1)}$, we compute $C^{(\ell)}$ as follows. We create a tripartite graph $G=(I \cup J \cup K, E)$, with each of $I,J,K$ containing $n$ vertices. We add edges between $I$ and $K$ according to the matrix A. Edges from the $k$-th column get the label $\lfloor k / 2^\ell \rfloor$. We add edges between $K$ and $J$ according to the matrix B. Edges from the $k$-th row get the label $\lfloor k / 2^\ell \rfloor$. Finally, we add the full bipartite clique between $I$ and $J$. The edge between the $i$-th vertex of $I$ and the $j$-th vertex of $J$ gets the label $2 \cdot C^{(\ell+1)}$. That edge forms a monochromatic triangle if and only if $C_{ij} \in \big[ 2^\ell \cdot 2 \cdot C^{(\ell+1)}_{ij},
      2^\ell \cdot(2 \cdot C^{(\ell+1)}_{ij} + 1) \big)$,
i.e.~$C^{(\ell)}_{ij}=2 \cdot C^{(\ell + 1)}_{ij}$. Otherwise, it must be that $C_{ij} \in \big[ 2^\ell \cdot (2 \cdot C^{(\ell+1)}_{ij} + 1),
      2^\ell \cdot (2 \cdot C^{(\ell+1)}_{ij} + 2) \big)$,
i.e.~$C^{(\ell)}_{ij}=2 \cdot C^{(\ell + 1)}_{ij} + 1$. Therefore, solving \mono on $G$ suffices to compute $C^{(\ell)}$. Finally, observe that $C=C^{(0)}.$
\end{proof}

With a slightly more involved argument we show how to solve \uapsp with $\Oh{\log^2 n}$ calls to \mono.

\ThmApspToMono*
\begin{proof}

We solve \uapsp in $\log n$ rounds, in the $i$-th round we compute matrix $D^{\leq 2^i}$ of lengths of shortest paths of length up to $2^i$ (other entries equal to $\infty$). Each round will consist of a parallel binary search, similar to the one we use in our reduction from \minw to \mono (Theorem~\ref{thm:MinwToMono}). The algorithm is based on the fact that in unweighted graphs every path can be split roughly in half, i.e.~if the distance from $u$ to $v$ equals to $k$, then there must exist a vertex $w$ such that the distances from $u$ to $w$ and from $w$ to $v$ equal to $\lfloor k/2 \rfloor + \{0,1\}$.

To start, note that $D^{\leq 2^0}$ is a $\{0,1,\infty\}$-matrix that can be easily obtained from the adjacency matrix of the input graph. Now, assume we already computed $D^{\leq 2^i}$ and let us proceed to compute $D^{\leq 2^{i+1}}$. To avoid excessive indexing, let $A$ denote $D^{\leq 2^i}$, and $B$ denote $D^{\leq 2^{i+1}}$. For each entry of the output matrix $B$ we will keep an interval which that entry is guaranteed to lie in. With a single call to \mono we will be able to halve all the intervals.

For $\ell\in\{0,1,\ldots,i+2\}$, let $B^{(\ell)}$ denote the matrix pointing to $2^\ell$-length intervals in which entries of $B$ lie, that is $B^{(\ell)}_{uv}$ equals to the unique integer such that $B_{uv} \in \big[2^\ell \cdot B^{(\ell)}_{uv}, 2^\ell \cdot (B^{(\ell)}_{uv} + 1) - 1\big]$, or to infinity in case $B_{uv}$ is infinite.

We will iterate over $\ell$ from $i+2$ down to $0$. First, we need to compute $B^{(i+2)}$, whose entries are either zeros or infinities. Recall that we already know the matrix $A=D^{\leq 2^i}$. Consider a pair of nodes $u$ and $v$ that are at distance at most $2^{i+1}$. There must exist a node $w$ such that $A_{uw} \leq 2^i$ and $A_{wv} \leq 2^{i}$, that is, equivalently both $A_{uw}$ and $A_{wv}$ are finite. We obtain the matrix $B^{(i+2)}$ by squaring the $(0,1)$ matrix obtained from $A$ by putting ones at the finite entries and zeros elsewhere. That single Boolean matrix multiplication can be easily simulated by a single call to \mono, using just two colors.

Once we have the matrix $B^{(\ell+1)}$ we want to compute $B^{(\ell)}$. For this we first note that if $B^{(\ell+1)}_{uv}=j$ then $B^{(\ell)}_{uv}$ is either $2j$ or $2j+1$.
If $B^{(\ell)}_{uv}=2j$, then there must exist a vertex $w$ such that
\begin{equation}
\label{eqn:apsp}    
A_{uw}\in \big[2^{\ell-1} \cdot (2j), 2^{\ell-1} \cdot (2j+1)\big), \quad\text{and}\quad   A_{wv}\in \big[2^{\ell-1} \cdot (2j), 2^{\ell-1} \cdot (2j+1)\big].
\end{equation}
Furthermore, if $B^{(\ell)}_{uv}>2j$, then there is no $w$ such that the above condition holds. This will allow us to distinguish between the $2j$ and $2j+1$ cases by coloring the matrix $A$ based on which range the entries fall in. Note that the ranges in Condition~(\ref{eqn:apsp}) do not overlap with corresponding ranges for different integer values $j' \ne j$. Thus we will be able to use a single call to \mono to check in parallel for all values of $B^{(\ell)}_{uv}$ if they are the smaller even value $2\cdot B^{(\ell+1)}_{uv}$ or the larger odd value $2\cdot B^{(\ell+1)}_{uv}+1$.

We construct an \mono~instance with a tripartite graph  with the vertex set $U \sqcup V \sqcup W$ where $U$, $V$ and $W$ are disjoint copies of the original vertex set.
The edges between $U$ and $V$ correspond to our desired output. If $B^{(\ell+1)}_{uv} =j$ then we color the edge $(u,v) \in U \times V$ with $j$. 
The edges between $U$ and $W$ correspond to the first part of Condition~(\ref{eqn:apsp}), i.e.~if $A_{uw} \in \big[2^{\ell-1}\cdot (2j), 2^{\ell-1}\cdot (2j+1)\big)$, then we add the edge $(u,w)$ in $U\times W$ with color $j$.
The edges between $W$ and $V$ correspond to the second part of Condition~(\ref{eqn:apsp}), i.e.~if $A_{wv} \in \big[2^{\ell-1}\cdot (2j), 2^{\ell-1} \cdot (2j+1)\big]$, then we add the edge $(w,v)$ in $W \times V$ with color $j$.
Any edge $(u,v)$ in $U \times V$ that is in a monochromatic triangle implies $B^{(\ell)}_{uv} =2\cdot B^{(\ell+1)}_{uv}$. Conversely, any edge $(u,v)$ that is not a part of any monochromatic triangle implies $B^{(\ell)}_{uv} =2 \cdot B^{(\ell+1)}_{uv}+1$.

We iterate down until $B^{(0)}$, and observe that $B^{(0)} = B$. Thus, with $\Oh{\log n}$ calls we can compute $B= D^{\leq 2^{i+1}}$ from $A = D^{\leq 2^i}$. To solve \uapsp the total number of calls we need to make to \mono~is $\Oh{\log^2(n)}$.
Therefore, if \mono~can be solved in $T(n)$ time, then \uapsp~can be solved in $\Oh{T(n) \log^2(n)}$ time.
\end{proof}

Now we show that \mono is $3$SUM-hard. In our proof we use as a black-box the following reduction from $3$SUM to \sparse.

\begin{lemma}[Kopelowitz, Pettie, Porat~\cite{Kopelowitz16}]
\label{lem:sparse}
A single instance of $3$SUM of size $n$ can be reduced to a single instance of \sparse with $\Th{n \log n}$ vertices and $\Th{n^{3/2} \log n}$ edges.
\end{lemma}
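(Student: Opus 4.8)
The plan is to recover the Kopelowitz--Pettie--Porat reduction~\cite{Kopelowitz16}, which is itself a refinement of Patrascu's~\cite{Patrascu10}, via two phases: first use almost-linear hashing to replace the arbitrary-integer $3$SUM instance by a highly structured ``bucketed'' instance, and then encode the bucketed instance as a sparse tripartite graph whose triangles are (essentially) in bijection with witnesses $a+b=c$. Since \sparse reports for every edge whether it lies in a triangle, it in particular detects whether any triangle exists, and hence decides the $3$SUM instance. The bookkeeping is arranged so that the final graph has $\Th{n\log n}$ vertices and $\Th{n^{3/2}\log n}$ edges and is produced in time linear in its size.

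First I would carry out the hashing phase. Take the input as three lists $A,B,C$ of $n$ integers, seeking $a,b,c$ with $a+b=c$. Draw $h\colon\mathbb Z\to[R]$ with $R=\Th{\sqrt n}$ from an almost-linear family (Dietzfelbinger's scheme~\cite{Dietzfelbinger96}): it satisfies $\bigl(h(x)+h(y)-h(x+y)\bigr)\bmod R\in\{0,1,\dots,d\}$ for an absolute constant $d$, and $\Pr[h(x)=h(y)]=\Oh{1/R}$ for $x\neq y$. Partition $A$ into buckets $A_i=\{a\in A:h(a)=i\}$, and similarly $B,C$. The pairwise-collision bound together with a union bound over the $\Th{\sqrt n}$ buckets gives, with high probability, that every bucket has size $\Oh{\sqrt n\log n}$; redraw $h$ if this fails. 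The decisive structural consequence of almost-linearity is that a witness $a+b=c$ with $a\in A_i$, $b\in B_j$ forces $c\in C_k$ for one of the $d+1$ indices $k\equiv i+j+t\pmod R$ with $t\in\{0,\dots,d\}$; so only a constant number of ``aligned'' bucket-triples ever need to be consulted, and we may pad every bucket to a common size $s=\Th{\sqrt n\log n}$ with dummy values that never form a witness.

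Next I would build the graph. It is cleanest to first reduce the bucketed instance to a constant number of shifted copies of the array problem ``find $i,j$ with $T[i]+T[j]=T[i+j]$'' on an array of length $N=\Th{n\log n}$ (bucket indices become array positions; almost-linearity becomes the convolution constraint $i+j=k$; the list of elements sitting in a bucket rides along at each position), and then reduce that array problem via Patrascu's radix-$b$ trick with $b=\lceil\sqrt N\rceil$: writing each index in base $b$ as a ``high'' part in $[b]$ and a ``low'' part in $[b]$, the three parts of the tripartite graph hold mixed high/low coordinates of the triple, each vertex has degree $\Oh{b}=\Ot{\sqrt N}$ into the other parts (the single base-$b$ carry and the $\Oh{1}$ hash alignments are absorbed into $\Oh{1}$ parallel edges), and one designated bipartite layer is the ``output'' layer whose edges correspond to candidate pairs. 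Then an edge of the output layer lies in a triangle exactly when that pair extends to a genuine witness, so \sparse solves the instance; a routine count with the bucket-size and degree bounds shows the graph has $\Th{n\log n}$ vertices and $\Th{n^{3/2}\log n}$ edges (padding with isolated vertices and edges if the construction lands below).

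The hard part will be the hashing analysis: the entire size guarantee -- and in particular the edge bound $\Th{n^{3/2}\log n}$ -- hinges on the hash genuinely balancing the buckets, which is the only probabilistic, non-combinatorial ingredient and is precisely the source of the $\log n$ factors (they are the cost of upgrading ``expected bucket load $\sqrt n$'' to ``every bucket load $\Oh{\sqrt n\log n}$ with high probability'' over $\Th{\sqrt n}$ buckets, and of pinning the exact polylogarithmic exponents in the array-problem blow-up). The secondary obstacle is getting the triangle--witness correspondence exactly right: one must propagate the constant set of hash alignments $t\in\{0,\dots,d\}$ and the one base-$b$ carry through the radix splitting so that a triangle recovers $a$ and $b$ uniquely and no false positives arise. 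Both steps are standard but fiddly, and neither needs an idea beyond those of~\cite{Patrascu10,Kopelowitz16}.
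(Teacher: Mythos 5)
First, a point of comparison: the paper does not prove this lemma at all -- it is imported as a black box from \cite{Kopelowitz16} -- so there is no in-paper proof to measure your write-up against, and your proposal has to stand on its own as a reconstruction of the Kopelowitz--Pettie--Porat argument.

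As a reconstruction it has the right skeleton (almost-linear hashing into $\Th{\sqrt n}$ buckets, a tripartite graph built from a high/low index split, triangles certifying witnesses), but there is a genuine gap at the one step where the real content of \cite{Kopelowitz16} lives: you never specify how an \emph{unlabeled} edge or triangle of a sparse graph tests the arithmetic condition $a+b=c$ among the actual integers, as opposed to the index condition $i+j\equiv k\pmod R$ among bucket labels. Your graph's vertices ``hold mixed high/low coordinates of the triple'' of bucket indices, and ``the list of elements sitting in a bucket rides along at each position'' -- but \sparse admits no labels, colors, or riders: if the adjacency structure encodes only index alignment, then \emph{every} aligned bucket triple closes a triangle and the output of \sparse is independent of the input values. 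Building the value test into the adjacency is precisely what forces the second layer of the KPP construction (the derived offline set-disjointness/set-intersection instances whose sets come from the shifted bucket contents, with set--element incidences forming one class of edges and query pairs forming the ``output'' edges), and it is also what actually determines the $\Th{n^{3/2}\log n}$ edge count; calling this the ``secondary obstacle'' and deferring it as ``standard but fiddly'' skips the main idea rather than a detail. Two smaller issues: (i) the claim that every bucket has load $\Oh{\sqrt n\log n}$ w.h.p.\ does not follow from a Chernoff bound, since Dietzfelbinger-type almost-linear families have only limited independence -- the standard fix is to bound the \emph{expected} mass sitting in overloaded buckets and dispose of those elements outside the graph by brute force; (ii) your own parameters ($N=\Th{n\log n}$ array slots and degree $\Th{\sqrt N}$) yield $\Th{n^{3/2}\log^{3/2}n}$ edges, overshooting the stated bound, and padding only repairs an undershoot, not an overshoot.
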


\ThmTsumToMono*
\begin{proof}
Given an instance of $3$SUM of size $N$, we use the self-reduction (Lemma~\ref{lem:3sumself}), and reduce it to $\Oh{N^{2/5}}$ instances of size $\Oh{N^{4/5}}$ each. Then, we reduce each of these instances to an \sparse instance with $n=\Th{N^{4/5} \log N}$ vertices and $m=\Th{N^{6/5} \log N}$ edges, using Lemma~\ref{lem:sparse}. Now we will show how to combine these $\Oh{N^{2/5}}$ \sparse instances to form polylogarithmically many \mono instances, each with $\Oh{N^{4/5} \log N}$ vertices, which will finish the proof.

Assume w.l.o.g.~that all the created graphs are over the same vertex set $[n]$.
If we were lucky enough and the edge sets of the created \sparse instances were disjoint, the reduction would be essentially done. Indeed, we could simply union the edge sets to create a single graph, and use colors to track from which graph every edge originates. Solving that one \mono instance would provide answers to all \sparse instances. Sadly, the chances of such a favorable collision-free scenario are very slim. The remaining part of the proof shows how to deal with multiple \sparse instances containing the same edge.

We randomly permute the vertex sets, for each graph independently. For a fixed $(u,v) \in [n]^2$, such that $u \neq v$, the probability that a fixed graph contains the edge $(u,v)$ equals to $p = m / \binom{n}{2} = \Oh{(N^{2/5} \log N)^{-1}}$. The expected number of $(u,v)$ edges across all graphs is $\Oh{N^{2/5} \cdot p} = \Oh{1/\log N}$. By a Chernoff bound, the probability that the number of $(u,v)$ edges exceeds $c \log n$ is less than $(1/e)^{\Th{c \log n}}$. We take $c$ large enough so that, by the union bound over all possible $\binom{n}{2}$ edges, with probability at least $1/2$ no edge appears more than $c \log n$ times across all graphs. For each $(u,v) \in [n]^2$ we arbitrarily number all $(u,v)$ edges with consecutive positive integers from $1$ up to at most $c \log n$. We iterate over all triples $(i,j,k) \in [c \log n]^3$. For every triple we create a tripartite graph with the vertex set $V_1 \sqcup V_2 \sqcup V_3$, for $V_1=V_2=V_3=[n]$. We create an edge $(u,v)$ between $V_1$ and $V_2$ if there exists an edge $(u,v)$ with number $i$ assigned to it in any of the \sparse instances. Note that there is at most one such instance. We set the color of the newly created edge to the identifier of the instance it originates from. Similarly, we create edges between $V_2$ and $V_3$ using edges with number $j$ assigned, and between $V_3$ and $V_1$ using number $k$. That gives us $(c \log n)^3$ instances of \mono. Note that every triangle present in any of the \sparse instance corresponds to a single monochromatic in one of the \mono instances, and vice versa. We solve all \mono instances and combine the outputs in order to get the output for all \sparse instances, and eventually for the $3$SUM instance.
\end{proof}

The next two theorems use techniques similar to Theorems~\ref{thm:MinwToMono} and~\ref{thm:ApspToMono} to give reductions to \minmax.

\ThmMinwToMinmax*
\begin{proof}
Given two $(0,1)$ matrices $A$ and $B$, we construct matrices $A'$ and $B'$ such that
\[A'_{ik} = \begin{cases}k &\text{if } A_{ik} = 1,\\\infty &\text{if } A_{ik} = 0,\end{cases} \quad \text{and} \quad B'_{kj} = \begin{cases}k &\text{if } B_{kj} = 1,\\\infty &\text{if } B_{kj}=0.\end{cases}\]
Observe that the $(\min,\max)$-product of $A'$ and $B'$ equals to the minimum witness product of $A$ and $B$.
\end{proof}

\ThmApspToMinmax*
\begin{proof}
The reduction is similar to the reduction from \uapsp to \mono (Theorem~\ref{thm:ApspToMono}) in that we also have $\log n$ rounds, and in the $i$-th round we compute matrix $D^{\leq 2^i}$ of lengths of shortest paths of length up to $2^i$ (other entries equal to $\infty$). The key difference is that, in each round, instead of performing a binary search and issuing $\log n$ calls to \mono, we issue just two calls to \minmax.

As before, first note that $D^{\leq 2^0}$ is a $\{0,1,\infty\}$-matrix that can be easily obtained from the adjacency matrix of the input graph. Now, assume we already computed $D^{\leq 2^i}$ and let us proceed to compute $D^{\leq 2^{i+1}}$. Let $\ell = 2^i$. Naturally, $D^{\leq 2\ell}$ is the $(\min,+)$-product of $D^{\leq \ell}$ with itself, but this sole observation is not enough for our purposes. We will exploit the fact that $D^{\leq \ell}$ is not an arbitrary matrix -- but a (truncated) matrix of shortest paths in an unweighted graph -- in order to compute that specific $(\min,+)$-product using a \minmax algorithm. Let $A \owedge B$ denote the $(\min,\max)$-product of matrices $A$ and $B$.

First, we handle even-length paths. We compute $E = 2 \cdot (D^{\leq \ell} \owedge D^{\leq \ell})$. Note that $D^{\leq 2\ell}_{uv} \leq E_{uv}$ for all $u, v \in V$, because for any two integers $a, b$ we have $a + b \leq 2\cdot\max(a,b)$. Moreover, if $D^{\leq 2\ell}_{uv}=2k$, then there must exist $w \in V$ such that $D^{\leq\ell}_{uw} = D^{\leq\ell}_{wv} = k$, and thus $D^{\leq\ell}_{uw} + D^{\leq\ell}_{wv} = 2\cdot\max(D^{\leq\ell}_{uw}, D^{\leq\ell}_{wv})$ and $D^{\leq 2\ell}_{uv} = E_{uv}$.

For odd-length paths we proceed in a similar manner, just the formulas become slightly more obscure. We compute $O = 2 \cdot (D^{\leq \ell} \owedge (D^{\leq \ell} - 1)) + 1$. Note that $D^{\leq 2\ell}_{uv} \leq O_{uv}$ for all $u, v \in V$, because for any two integers $a, b$ we have $a + b \leq 2\cdot\max(a,b-1)+1$. Moreover, if $D^{\leq 2\ell}_{uv}=2k+1$, then there must exist $w \in V$ such that $D^{\leq\ell}_{uw} = k$ and $D^{\leq\ell}_{wv} = k + 1$, and thus $D^{\leq\ell}_{uw} + D^{\leq\ell}_{wv} = 2\cdot\max(D^{\leq\ell}_{uw}, D^{\leq\ell}_{wv}-1)+1$ and $D^{\leq 2\ell}_{uv} = O_{uv}$.

Consequently, we compute $D^{\leq 2\ell}_{uv} = \min(E_{uv}, O_{uv})$, for all $u,v \in V$.
\end{proof}

\section{Reductions for Convolution Problems}
 
In this section we provide two reductions which together show that \monoconv is fine-grained equivalent to $3$SUM. Recall that the best known algorithms for \monoconv require time $n^{3/2-o(1)}$, and the best algorithms for $3$SUM require time $n^{2-o(1)}$, so this is an equivalence between problems of different time complexity. At the end of the section we reduce \coin to \monoconv.

First, let us recall the All-Integers variant of $3$SUM, which parallels the All-Edges variants of our graph problems. That variant is easier to work with than the original $3$SUM problem for our purposes. Luckily if either variant has a subquadratic algorithm then they both do~\cite{Subcubic}.

\begin{definition}[All-Integers $3$SUM]
Given three lists $A, B, C$ of $n$ integers each, output the list of all integers $c \in C$ such that there exist $a \in A$ and $b \in B$ such that $a + b = c$.
\end{definition}

\begin{lemma}[Vassilevska Williams, Williams~\cite{Subcubic}]
\label{lem:allints3sum}
If $3$SUM is in $\Oh{n^{2-\eps}}$ time, then All-Integers $3$SUM is in $\Oh{n^{2-\eps/2}}$ time.
\end{lemma}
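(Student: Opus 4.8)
The plan is to combine the self-reducibility of $3$SUM (Lemma~\ref{lem:3sumself}) with the assumed $\Oh{n^{2-\eps}}$-time $3$SUM algorithm, used as a black-box decision subroutine. First I would sort $A$, $B$, $C$ and apply Lemma~\ref{lem:3sumself} with the parameter $\alpha = 1/2$ to the given All-Integers $3$SUM instance, producing $\Oh{n}$ sub-instances of size $\Oh{\sqrt{n}}$ each. The crucial point is that the self-reduction is witness-preserving: every triple $a+b=c$ of the original instance survives into some produced sub-instance (with $a$, $b$, $c$ landing in its $A$-, $B$-, $C$-lists respectively). Hence a value $c \in C$ is realizable in the original instance if and only if it is realizable — as a $C$-element — in one of the sub-instances whose $C$-list contains it, and there are only $\Oh{\sqrt n}$ such sub-instances per $c$ (this bound on the ``in-degree'' of each element has to be read off from the concrete self-reduction of Lemma~\ref{lem:3sumself}; both its hashing-based and its domination-based proofs have it).

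Next I would run the assumed decision algorithm on each of the $\Oh{n}$ sub-instances, at total cost $\Oh{n \cdot (\sqrt n)^{2-\eps}} = \Oh{n^{2-\eps/2}}$. Every sub-instance reported to contain no solution is discarded, and any $c$ all of whose sub-instances were discarded is thereby certified non-realizable. It remains to decide, for each $c$ that still lies in some surviving (``positive'') sub-instance, whether it is genuinely realizable there. For this I would recurse: a positive sub-instance is itself an All-Integers $3$SUM instance of size $\Oh{\sqrt n}$, so a recursive call pins down exactly which of its $C$-elements are realizable, and aggregating over the $\Oh{\sqrt n}$ sub-instances of $c$ yields the answer for $c$; small instances are resolved by the trivial $\Oh{m^2}$ algorithm (a sorted two-pointer scan for each $c$). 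Everything other than this resolution step — sorting, applying Lemma~\ref{lem:3sumself}, and merging the per-sub-instance outputs — is routine and runs in $\Ot{n^{3/2}}$, which is dominated by $\Oh{n^{2-\eps/2}}$.

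The main obstacle is bounding the total cost of resolving the positive sub-instances, and in particular making the recursion telescope to $\Oh{n^{2-\eps/2}}$ rather than blow up to $\Th{n^2}$. When the original instance has few solution triples the difficulty disappears: there is at most one positive sub-instance per solution triple, so with $s \le n^{1-\eps/2}$ solution triples, brute-forcing each positive sub-instance in $\Oh{n}$ time costs only $\Oh{s\cdot n}=\Oh{n^{2-\eps/2}}$ overall. The delicate case is a ``dense'' instance with many solution triples, where the number of positive sub-instances can be $\Th{n}$ and a careless recursion is too slow; here one has to exploit that the first decision sweep already did most of the work — e.g. by recursing with rebalanced block parameters and charging the recursive cost against the $\Oh{\sqrt n}$ bound on the number of sub-instances containing each element (which keeps the total size of all level-$j$ sub-instances $\Oh{n^{3/2}}$), or by handling the many-solutions regime with a separate, easier argument. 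Getting this accounting right across all recursion levels is the only real content of the proof.
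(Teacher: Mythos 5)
Your first sweep and your diagnosis of where the difficulty lies are both accurate, but the recursion you then propose genuinely does not close the gap, and your two fallbacks (``rebalanced block parameters'' and ``a separate, easier argument for the dense regime'') are placeholders rather than arguments. Concretely, if at each level you re-apply the $\alpha=1/2$ self-reduction to every surviving sub-instance, then at recursion depth $j$ you have on the order of $n^{2-2^{1-j}}$ sub-instances of size $n^{2^{-j}}$, and even a single decision-oracle pass over them already costs $\Th{n^{2-\eps\cdot 2^{-j}}}$; this per-level cost \emph{increases} with $j$ and converges to $\Th{n^2}$, so the recursion does not telescope — it bottoms out at brute force.

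The idea you are missing — and the one the cited proof actually uses (it is the direct $3$SUM analogue of the All-Pairs Negative Triangle $\to$ Negative Triangle reduction in~\cite{Subcubic}) — is \emph{output charging} in place of deeper recursion. After producing the $\Oh{n}$ sub-instances of size $\Oh{\sqrt n}$, process them one at a time. When the decision oracle says ``yes'' on a sub-instance, do not recurse on the whole sub-instance; instead binary-search by halving its $A$-, $B$- and $C$-groups to isolate a single witness triple $(a,b,c)$, using $\Oh{\log n}$ decision calls on instances of size at most $\Oh{\sqrt n}$. Report $c$, delete $c$ globally (remove it from every sub-instance's $C$-list it still occupies — cheap bookkeeping, since each $c$ lies in only $\Oh{\sqrt n}$ of them), and query the same sub-instance again; advance only upon a ``no''. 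Each ``yes'' reports and retires a distinct $c$, so there are at most $n$ ``yes'' answers overall, and at most $\Oh{n}$ ``no'' answers (one per sub-instance). Thus the whole algorithm makes $\Ot{n}$ decision calls on instances of size $\Oh{\sqrt n}$, giving $\Ot{n\cdot(\sqrt n)^{2-\eps}}=\Ot{n^{2-\eps/2}}$; the logarithmic slack is absorbed by a slight retuning of the group size. This linear-in-output charging is exactly what replaces the multi-level recursion accounting you could not carry out.
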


An important ingredient of our reduction from $3$SUM to \mono (Theorem~\ref{thm:3sumToConv}) is the following range reduction for $3$SUM.

\begin{lemma}[Baran, Demaine, P{\v{a}}tra{\c{s}}cu, rephrased, see Section~2.1 of~\cite{Baran08}]
For every positive integer output size $s$, there exists a family of hash functions $H$ such that:
\begin{enumerate}
 \item Every hash function $h\in H$ hashes to the range $\{0,1,\ldots,R-1\}$ for $R=2^s$.
 \item For all integers $a, b, c \in \mathbb{Z}$ and all hash functions $h \in H$, if $a+b=c$, then \[h(a)+h(b) \equiv h(c)+\{-1,0,1\} \mod R.\]
 \item Given an integer $c$ and two lists of $n$ integers $A$ and $B$ such that there are no $a \in A, b \in B$ with $a+b=c$, 
 the probability, over hash functions $h$ drawn uniformly at random from $H$, that there exist $a \in A, b \in B$ such that  
 $h(a)+h(b) \equiv h(c)+\{-1,0,1\} \mod R$ is at most 
 $\Oh{n^2/R}$. 
\end{enumerate}
\label{lem:BaranHashing}
\end{lemma}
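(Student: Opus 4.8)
The plan is to instantiate $H$ with multiplicative hashing in the style of Dietzfelbinger~\cite{Dietzfelbinger96}, following the argument of Baran, Demaine and P\v{a}tra\c{s}cu~\cite{Baran08}. Fix a word size $w$ that exceeds $s$ by at least the bit-length of the integers we wish to hash plus a small constant, and for each odd $a \in \{1,3,\ldots,2^w-1\}$ define $h_a(x) = \lfloor (ax \bmod 2^w)/2^{w-s}\rfloor$, i.e.\ the top $s$ bits of the $w$-bit number $ax \bmod 2^w$; let $H = \{h_a\}$. Property~1 is then immediate, with $R = 2^s$, since $h_a$ takes values in $\{0,1,\ldots,2^s-1\}$.

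For Property~2 the crux is a carry estimate. Write $ax \bmod 2^w = 2^{w-s}p_h + p_\ell$ and $ay \bmod 2^w = 2^{w-s}q_h + q_\ell$ with $0 \le p_\ell, q_\ell < 2^{w-s}$, so that $h_a(x) = p_h$ and $h_a(y) = q_h$. Then $(ax \bmod 2^w) + (ay \bmod 2^w) = 2^{w-s}(p_h+q_h+\gamma) + r$ for some carry bit $\gamma \in \{0,1\}$ and $0 \le r < 2^{w-s}$, because $p_\ell + q_\ell < 2^{w-s+1}$; reducing further modulo $2^w$ only changes the leading coefficient by a multiple of $2^s$, so the top $s$ bits of $a(x+y) \bmod 2^w$ equal $p_h + q_h + \gamma \pmod{2^s}$. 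Hence $h_a(x) + h_a(y) \equiv h_a(x+y) - \gamma \pmod{2^s}$ for \emph{all} integers $x,y$, with no size restriction; specializing to $x+y = c$ gives $h_a(a) + h_a(b) \equiv h_a(c) + \{-1,0\} \pmod R \subseteq h_a(c) + \{-1,0,1\} \pmod R$.

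For Property~3 I would union-bound over the $n^2$ pairs $(a',b') \in A \times B$. Fix one; by hypothesis there is no solution with value $c$, so $d := a'+b'-c$ is a nonzero integer. Combining the carry estimate applied to $a'+b'$ with the bad event $h_a(a')+h_a(b') \equiv h_a(c)+\{-1,0,1\}$, the bad event forces $h_a(a'+b') - h_a(c) \in \{-1,0,1,2\} \pmod{2^s}$; and the same carry computation (now with $ac \bmod 2^w$ and $ad \bmod 2^w$) shows $h_a(a'+b') - h_a(c) \equiv h_a(d) + \{0,1\} \pmod{2^s}$, where $h_a(d) = \lfloor (ad \bmod 2^w)/2^{w-s}\rfloor$. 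So the bad event is contained in the event that $h_a(d)$ lies in a fixed set of at most six residues modulo $2^s$. Writing $d = 2^t d'$ with $d'$ odd, the map $a \mapsto ad' \bmod 2^{w-t}$ permutes the odd residues, and a uniformly random odd $a \in [0,2^w)$ makes $a \bmod 2^{w-t}$ uniform over odd residues; hence $ad \bmod 2^w = 2^t(ad' \bmod 2^{w-t})$ is uniform over the $2^{w-t-1}$ integers in $[0,2^w)$ that are $\equiv 2^t \pmod{2^{t+1}}$. These are $2^{t+1}$-spaced, so any length-$2^{w-s}$ block contains at most $2^{w-s-t-1}+1$ of them, giving $\Pr_a[h_a(d) = j] \le 2^{-s} + 2^{-(w-t-1)}$ for each $j$. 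Since $|d|$ is at most three times the largest input magnitude, the choice of $w$ guarantees $t \le w-s-1$, so this probability is $O(2^{-s}) = O(1/R)$; summing over the at most six residues and the $n^2$ pairs yields the claimed $O(n^2/R)$ bound.

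The main obstacle is the last step of Property~3: a large power of two dividing $d$ collapses the distribution of $ad \bmod 2^w$ onto few residues, and this is exactly what forces $w$ to be chosen comfortably above $s$ plus the input bit-length (so that $t$ stays below $w-s$), making $H$ depend on the universe size and not only on $s$. Everything else — Property~1, the carry identity, the bijection on odd residues, and the union bound — is routine.
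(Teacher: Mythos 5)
The paper itself does not prove Lemma~\ref{lem:BaranHashing}; it is quoted (``rephrased'') from Baran, Demaine, and P{\v{a}}tra{\c{s}}cu~\cite{Baran08}, so there is no in-paper argument for you to match. Your reconstruction follows exactly the proof in that source: instantiate $H$ with Dietzfelbinger-style multiplicative hashing $h_a(x)=\lfloor(ax\bmod 2^w)/2^{w-s}\rfloor$ over a uniformly random odd multiplier, prove the carry identity for Property~2, and for Property~3 union-bound over the $n^2$ pairs, reducing to the almost-uniformity of $h_a(d)$ for the nonzero difference $d=a'+b'-c$ via the 2-adic factorization $d=2^t d'$. I checked the carry bookkeeping (your Property~2 in fact gives the tighter $h(a)+h(b)\equiv h(c)+\{-1,0\}$; the lemma's $\{-1,0,1\}$ is a harmless relaxation), the reduction of the bad event to ``$h_a(d)$ lands in a fixed $O(1)$-size set of residues,'' the uniformity of $ad\bmod 2^w$ over the $2^{w-t-1}$ multiples of $2^t$ that are odd multiples, and the resulting bound $\Pr[h_a(d)=j]\le 2^{-s}+2^{-(w-t-1)}$; all of this is correct, and your closing observation that $w$ (hence $H$) must depend on the input bit-length so that $t\le w-s-1$ is the right caveat -- the lemma statement glosses over it because the intended setting is the word RAM. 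Two cosmetic slips worth fixing: you overload $a$ both as the hash multiplier and as an element of the list $A$ (e.g.\ ``$h_a(a)+h_a(b)$'' should read $h_a(a')+h_a(b')$ or use a different letter for the multiplier), and when $d<0$ you should first replace $d$ by its representative in $[0,2^w)$ before extracting the 2-adic valuation $t$ -- since $|d|<2^w$, adding $2^w$ leaves $t$ unchanged, so the argument is unaffected, but the step should be stated.
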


We are now ready to show that $3$SUM can be solved efficiently with a \monoconv algorithm. Our reduction uses the fact that we can re-write a $3$SUM instance with $n$ integers in $\{-R,\ldots,R\}$ as a convolution of $\Oh{R}$-length $(0,1)$ vectors, where a one in the $i$-th position corresponds to the number $i$ in the original $3$SUM instance. We will combine several such instances into one \monoconv instance by giving each instance its own number. A one in position $i$ in a convolution instance labelled $j$ will result in the \monoconv instance having $j$ in position $i$.

\ThmTsumToConv*
\begin{proof}
Given an instance of $3$SUM of size $n$, we reduce it to $\Oh{n^{2/3}}$ instances of size $\Oh{n^{2/3}}$ each, using the self-reduction (Lemma~\ref{lem:3sumself}). Although for the self-reduction itself it would be sufficient just to solve $3$SUM on each of these instances  --  i.e.~decide if there exist $a,b,c$ with $a+b=c$  --  we are going to solve the All-Integers $3$SUM variant  --  i.e.~decide for each $c$ if there exist $a$ and $b$ with $a+b=c$.

To each created instance we apply a hashing scheme of Lemma~\ref{lem:BaranHashing} in order to reduce the universe size down to $R=n^{4/3}$. This introduces false positives for each element with probability $\Oh{(n^{2/3})^2/R}=\Oh{1}$. Note that the hashing has one-sided error, i.e.~if for some element $c$ there are no $a$ and $b$ such that $h(a)+h(b) \equiv h(c)+\{-1,0,1\} \mod R$, then with certainty there are no $a$ and $b$ such that $a+b=c$. To mitigate the effect of false positives we create $\Oh{\log n}$ copies of each instance, each copy using an independently drawn hash function. Note that for every fixed element $c$, if there are no $a$, $b$ with $a+b=c$, then the probability that in each of the independent $\Oh{\log n}$ copies we detect that $h(a)+h(b) \equiv h(c)+\{-1,0,1\} \mod R$ for some $h(a),h(b)$ is $1/\mathrm{poly}(n)$, and we can make the degree of the polynomial arbitrarily large by choosing an appropriate multiplicative constant for the number of copies. Therefore we can use the union bound to argue that with at least $2/3$ probability there are no false positives across all instances and all elements.

Suppose that for some (sub-)instance $A,B,C$ of size $n^{2/3}$ we learned, for every one of the $\Oh{\log n}$ hashed instance copies, for every $t\in [R]$ such that there is some $c$ with $h(c)=t$, whether there are some $h(a),h(b)$ with $h(a)+h(b) \equiv h(c)+\{-1,0,1\} \mod R$. Then, we can go through every $c\in C$ and if for every copy the answer for $h(c)$ was YES, we can conclude that (with high probability) there exists $a\in A, b\in B$ with $a+b=c$, and if the answer was NO at least once, then we can conclude that there is no pair that sums to $c$.

Here an important point is that we need to solve {\em all} of the $\Oh{n^{2/3}\log n}$ instances of All-Integers $3$SUM above, each on $n^{2/3}$ integers over a range $[\Oh{n^{4/3}}]$. We will embed solving all instances simultaneously into solving a small (polylogarithmic) number of \monoconv instances.

Each of the above $\Oh{n^{2/3}\log n}$ instances of All-Integers $3$SUM easily reduces to an $(\text{OR},\text{AND})$-convolution of $(0,1)$ vectors of length $\Oh{n^{4/3}}$, each with only $\Oh{n^{2/3}}$ nonzero entries, and with only $\Oh{n^{2/3}}$ relevant output coordinates one needs to compute. If only we had no collisions -- i.e.~two instances with the same nonzero input coordinate or the same relevant output coordinate -- we could easily combine all the convolution instances into a single instance of \monoconv, with $\Oh{n^{2/3}\log n}$ different colors/values. However, the collisions are unavoidable. In order to circumvent these collisions, we will add small random shifts, and use a similar analysis as in the $3$SUM-to-\mono reduction of Theorem~\ref{thm:3sumToMono}.

Specifically, for each $3$-SUM (sub-)instance we chose a shift $s$ uniformly at random from a range of size $\Oh{n^{4/3}}$, we add $s$ to all elements in $A$, add $s$ to all elements in $B$, and add $-2s$ to all elements in $C$. These shifts do not change whether for a given triplet $a, b, c$ the condition $a+b=c$ holds or not. Let the numbers after the shift lie in $\{-R',\ldots,R'\}$ where $R'=\Oh{n^{4/3}}$.
For a fixed value $v\in \{-R',\ldots,R'\}$ the expected number of instances containing $v$ is $\Oh{\log n}$.
Indeed, for each particular instance, the probability that one of its numbers lands at $v$ after the shift is $\Oh{n^{2/3}/R'} = \Oh{1/n^{2/3}}$; then summing over all the instances gives an expectation of $\Oh{\log n}$.

Since the shifts are independent, we can use a Chernoff bound to bound the probability that the number of instances containing $v$ exceeds $c \log n$ by $\leq (1/e)^{\Theta(c \log n)}$. We take $c$ large enough so that, by union bound, the probability that no value is contained in more than $c \log n$ instances is at least $2/3$. 

Then, once again following the example of Theorem~\ref{thm:3sumToMono}, we reduce the problem to $(c \log n)^3$ instances of \monoconv as follows.

For each value $r \in \{-R',\ldots,R'\}$, let the instances that contain $r$ in their $A$ sets be 
$in_A(r)[1],\ldots,$ $in_A(r)[c\log n]$. Define $in_B(r)[1],\ldots,in_B(r)[c\log n]$ and $in_C(r)[1],\ldots,in_C(r)[c\log n]$ analogously.

We now create an instance of $\monoconv$ for each choice of $(x,y,z)\in [c\log n]^3$. In instance $(x,y,z)$ we create vectors $a,b,c$, where for each $r\in \{-R',\ldots,R'\}$, we set $a_r=in_A(r)[x]$, $b_r=in_B(r)[y]$ and $c_r=in_C(r)[z]$. 
Then for any instance $i$ that contains $r$ in $A$, $s$ in $B$ and $t$ in $C$, we would have $in_A(r)[x]=in_B(s)[y]=in_C(t)[z]=i$ for some $x,y,z$ and so we will place $i$ in $a_r,b_s,$ and $c_t$ for that choice of $x,y,z$.
\end{proof}

This next reduction finishes the equivalence between \monoconv~and $3$SUM. It uses a high-frequency/low-frequency split. For elements that appear at a high frequency we use FFT. For elements of low frequency we make calls to All-Integers $3$SUM. Recall that a subquadratic algorithm for $3$SUM implies a subquadratic algorithm for All-Integers $3$SUM (Lemma~\ref{lem:allints3sum}).

\ThmConvToTsum*
\begin{proof}
For a parameter $t$ to be determined later, consider the $t$ most frequent values. For each of these values use FFT to calculate the standard $(+,\times)$-convolution of two $(0,1)$ vectors formed from vectors $a$ and $b$ by putting ones everywhere that value appears, and zeros everywhere else. Examine, where the outputs of these convolutions are nonzero, in order to determine the part of output to \monoconv corresponding to occurrences of the frequent values in vector $c$. This takes $\Ot{tn}$ time in total.

Let $n_i$ denote the number of occurrences of the $i$-th of the remaining values in all three sequences. Clearly, $\forall_i\ n_i \leq 3n/t$, and $\sum_i n_i \leq 3n$. For each value $v$ out of those remaining values construct sets of indices at which it appears in vectors $a$, $b$, $c$, i.e.~$A=\{j : a_j=v\}$, $B=\{j : b_j=v\}$, $C=\{j : c_j=v\}$, and solve All-Integers $3$SUM on these sets. For each element $j$ reported by the All-Integers $3$SUM algorithm assign the corresponding output of \monoconv $d_j=1$. By Lemma~\ref{lem:allints3sum}, solving these All-Integers $3$SUM instances takes an order of
\[\sum_i n_i^{2-\eps/2} = \sum_i n_i\cdot n_i^{1-\eps/2} \leq \sum_i n_i\cdot(3n/t)^{1-\eps/2} \leq 3n \cdot(3n/t)^{1-\eps/2}\]
time. The total time is thus $\Ot{tn + n \cdot(n/t)^{1-\eps/2}}$. Optimize by setting $t=n^{1/2-\eps/(8-2\eps)}$, and get the desired runtime.
\end{proof}

Our final theorem connects the \coin~problem to our network of reductions. The proof uses the same structure and techniques as the reduction from \uapsp to \mono in Theorem~\ref{thm:ApspToMono}.

\ThmCoinToConv*
\begin{proof}
Let $S$ denote the array of output values, i.e.~$S[v]$ equals to the minimum number of coins that sum to $v$. Parallel to the proof of Theorem~\ref{thm:ApspToMono}, let $S^{\leq 2^i}[v]$ be infinity if $S[v]>2^i$, and otherwise equal to $S[v]$. We solve \coin~in $\log n$ rounds, in the $i$-th round we compute $S^{\leq 2^i}$.

Note that $S^{\leq 2^0}[0] = 0$, and, for $v \geq 1$, $S^{\leq 2^0}[v] = 1$ if $v \in C$ and $S^{\leq 2^0}[v] = \infty$ otherwise. Further note that $S=S^{\leq 2^{\log n}}$. We will show how to compute $S^{\leq 2^{i+1}}$ given $S^{\leq 2^i}$. We will then iterate $i$ from $0$ up to $\log n$.

Following the style of Theorem~\ref{thm:ApspToMono}, we avoid overparameterizing by setting $A = S^{\leq 2^i}$ and $B = S^{\leq 2^{i+1}}$. Let $B^{(\ell)}$ be an array pointing to $2^\ell$-length intervals in which entries of $B$ lie, i.e.~$B^{(\ell)}[v]=j$ if $B[v] \in [2^{\ell}j,2^{\ell}(j+1)-1]$. We will iterate $\ell$ from $i+2$ down to $0$ to compute $B$ from $A$.

First we show how to compute $B^{(i+2)}$ from $A$. If there is a way to sum to $v$ with at most $2^{i+1}$ coins, then there must be a $u \in [0,n]$ such that both $A[u]$ and $A[v-u]$ are at most $2^i$. Conversely, if there is no way to sum to $v$ with at most $2^i$ coins, then there will be no $u$ that meets the above criteria. Therefore we create a $(0,1)$ vector $a$ with $a_v=1$ if and only if $A[v]$ is finite. Then, we compute the $(+,\times)$-convolution of $a$ with itself, in near-linear time using FFT. We set $B[v]=0$ where the convolution output is non-zero and $B[v]=\infty$ everywhere else.

Now we show how to compute $B^{(\ell)}$ from $A$ and $B^{(\ell+1)}$. Note that if $B^{(\ell+1)}[v] = j$ then $B^{(\ell)}[v]\in \{2j,2j+1\}$. Next, note that if $B^{(\ell)}[v] = 2j$, then there must exist an integer $u \in [0,n]$ such that 
\begin{equation}
\label{eqn:coin}
    A[u]\in \big[2^{\ell-1}\cdot (2j), 2^{\ell-1} \cdot (2j+1)\big), \quad\text{and}\quad A[v-u]\in \big[2^{\ell-1} \cdot (2j), 2^{\ell-1} \cdot (2j+1)\big].
\end{equation}

Furthermore, if $B^{(\ell)}[v]>2j$, then there is no $u$ such that the above condition holds. This will allow us to distinguish between the $2j$ and $2j+1$ cases. Note that the ranges in Condition~(\ref{eqn:coin}) do not overlap with corresponding ranges for different integer values $j' \ne j$. Thus, we will be able to use a single call to \monoconv to check in parallel for all values $v$ if $B^{(\ell)}[v]$ is the smaller even value $2B^{(\ell+1)}[v]$ or the larger odd value $2B^{(\ell+1)}[v]+1$.

We construct a \monoconv instance with three input vectors $a, b, c$.
The first input vector corresponds to the first part of Condition~(\ref{eqn:coin}), i.e.~if $A[v] \in \big[2^{\ell-1}\cdot (2j), 2^{\ell-1}\cdot(2j+1)\big)$, then $a_v=j$. Any entries $a_v$ unset by this condition are given the special value $a_v=-1$.
The second vector corresponds to the second part of Condition~(\ref{eqn:coin}), i.e.~if $A[v] \in \big[2^{\ell-1}\cdot (2j), 2^{\ell-1}\cdot(2j+1)\big]$, then $b_v=j$. Similarly, any entries $b_v$ unset by this condition are given the special value $b_v=-1$.
The last vector corresponds to our desired output, i.e.~$c_v = B^{(\ell+1)}[v]$.
Let $d$ denote the vector output by this \monoconv~call. 
Now, if $d_v=1$ then $B^{(\ell)}[v] = 2B^{(\ell+1)}[v]$, else  $B^{(\ell)}[v] = 2B^{(\ell+1)}[v]+1$. 

We iterate down until $B^{(0)}$, and observe that $B^{(0)}=B$. With $\Oh{\log n}$ calls to \monoconv we can thus compute $B=S^{\leq 2^{i+1}}$ from $A = S^{\leq 2^i}$. To solve \coin the total number of \monoconv calls is $\Oh{\log^2 n}$. Therefore if \monoconv can be solved in $T(n)$ time, then \coin can be solved in $\Oh{T(n)\log^2 n}$ time.
\end{proof}

\bibliography{main}

\end{document}